\newcommand{\R}{\mathbb{R}}
\newcommand{\C}{\mathbb{C}}
\renewcommand{\d}[1]{\mathrm{d}#1}
\newcommand{\e}{\mathrm{e}}
\renewcommand{\j}{\mathrm{j}}
\newcommand{\calR}{\mathcal{R}}
\newcommand{\calS}{\mathcal{S}}
\DeclareMathOperator{\argmin}{arg \ min}
\DeclareMathOperator{\row}{row}
\newcommand\blfootnote[1]{%
  \begingroup
  \renewcommand\thefootnote{}\footnote{#1}%
  \addtocounter{footnote}{-1}%
  \endgroup
}
\begin{document}
\title{Trading beams for bandwidth: Imaging with randomized beamforming}
\author{Rakshith Sharma Srinivasa, Mark A.\ Davenport, Justin Romberg \\ Georgia institute of technology}

\maketitle

\begin{abstract}
We study the problem of actively imaging a range-limited far-field scene using an antenna array.  We describe how the range limit imposes structure in the measurements across multiple wavelengths.  This structure allows us to introduce a novel trade-off: the number of spatial array measurements (i.e., beams that have to be formed) can be reduced significantly lower than the number array elements if the scene is illuminated with a broadband source.  To take advantage of this trade-off, we use a small number of ``generic" linear combinations of the array outputs, instead of the phase offsets used in conventional beamforming. 
We provide theoretical justification for the proposed trade-off without making any strong structural assumptions on the target scene (such as sparsity) except that it is range limited. In proving our theoretical results, we take inspiration from the \textit{sketching} literature. We also provide simulation results to establish the merit of the proposed signal acquisition strategy. Our proposed method results in a reduction in the number of required spatial measurements in an array imaging system and hence can directly impact their speed and cost of operation. 

\end{abstract}
\section{Introduction}
\label{sec:Intro}
In active array imaging, a transmitter emits an excitation signal and then forms an image using the reflections collected at an array of sensors. This technique has been employed in a multitude of fields ranging from medicine to security and surveillance, among many others. Array imaging offers a window outside of the visible spectrum, which can prove crucial in many applications where visible light cannot penetrate. However, a high cost barrier has historically prevented widespread adoption of array imaging in commercial products \cite{ehyaie2011novel}. Nevertheless, applications such as autonomous vehicles, depth sensing, gesture recognition \cite{soli} and others have caused an increase of interest in commercializing active imaging modalities such as LiDAR and RADAR, leading to a number of efforts aimed at reducing the cost and increasing the efficiency of array imaging systems. In this paper, we consider a general antenna array imaging system and propose a novel trade-off that utilizes bandwidth of the excitation signal to reduce the number of array measurements needed to image targets that are range-limited, thus directly impacting the speed and cost of acquisition.

\blfootnote{This work was supported, in part, by NSF grants CCF-1350616 and CCF-1409406, a grant from Lockheed Martin, and a gift from the Alfred P.\ Sloan Foundation}

\blfootnote{A part of the work presented here appeared in \cite{SDR_Sketching_2017} and in \cite{SDR_Localized_2018}}

We use the standard linear model for array imaging in a setting with a single transmitter and multiple receiver elements. We use $M$ to denote the number of antenna array elements and $\lambda$ to denote the wavelength of the active excitation signal. We will later show that the set of discrete measurements collected by the antenna array elements at wavelength $\lambda$, $y_\lambda$, can be modeled as 
\begin{equation}
\label{eq:intro_linear_model}
y_\lambda = A_\lambda x_0
\end{equation} 
where $x_0 \in \R^N$ denotes the sampled target scene and $A_\lambda$ is the linear operator ($M \times N$ matrix) mapping the scene to the measurements.  Collecting broadband measurements is equivalent to collecting $y = \{ y_\lambda\}$ for ${\lambda = \lambda_1, \ \lambda_2, \ \cdots, \ \lambda_K}$. We note that all $\{y_\lambda \}$ can be obtained simultaneously (using only $M$ ``spatial measurements'') by using a single broadband excitation signal and computing a temporal Fourier transform. 

A standard imaging method known as \textit{beamforming} collects linear combinations of the array elements' outputs instead of sampling them directly.  This can be modeled as:
\begin{equation}
\label{eq:intro_beamforming}
z_\lambda = \psi_\lambda y_\lambda = \psi_\lambda A_\lambda x_0 
\end{equation} where $\psi_\lambda$ is typically an invertible $M \times M$ matrix. In traditional beamforming the weights are chosen to induce spatial selectivity where each of the $M$ measurements collect reflections from distinct spatial sectors. This can be considered a special case of \textit{aperture coding}: collecting linear combinations of the array outputs.  Henceforth, we refer to linear combinations of array outputs as spatial measurements or beams. Note that beams conventionally refer to directivity inducing linear combinations, but for the purposes of this paper, we refer to any linear combination of the array outputs as a beam.

Our main observation is that the set of spatial, broadband measurements of such range-limited targets lie in a lower dimensional subspace and hence have a limited number of degrees of freedom. This naturally leads to the question of whether spatially undersampled array data of such scenes can still yield reconstructions as good as those obtained with full data. We provide a positive answer to this question. While this is reminiscent of {\em compressed sensing}, an important distinction between our work and the compressed sensing paradigm is that we do not impose any model such as sparsity on the target scene. We only require the scene to be range-limited.  

Our motivation in this work is based on the observation that sampling and storing all of the array elements' outputs, or obtaining all of the $M$ possible linear combinations required for traditional beamforming can be challenging and is in fact wasteful when the target scene is range-limited. We show that in this case, by taking fewer \textit{generic} linear combinations (or \textit{aperture codes}) or even just spatial subsampling, one can obtain the same quality reconstructions as that of using full measurements.

We take inspiration from a set of dimensionality reduction techniques used in the numerical linear algebra community known as \textit{sketching} \cite{woodruff2014sketching} to provide theoretical justification for the proposed signal acquisition method. We show that image reconstruction with a few generic aperture codes is equivalent to a \textit{sketched least squares} problem of the form \[ \min_x \|\Phi y - \Phi Ax \|^2 \] where $\Phi$ is a highly structured compresssive matrix. We show that it provides a solution equivalent to that of solving the higher dimensional problem \[ \min_x \|y - Ax \|^2 \] which corresponds to standard image reconstruction methods. However, we again emphasize that unlike compressed sensing techniques, we do not assume any sparsity in the image domain.

\begin{figure}[!t]
\centering
\begin{minipage}{.35\linewidth}
\centering
\subfloat[\small \sl full imaging]{\label{CR:a}\includegraphics[scale=.2]{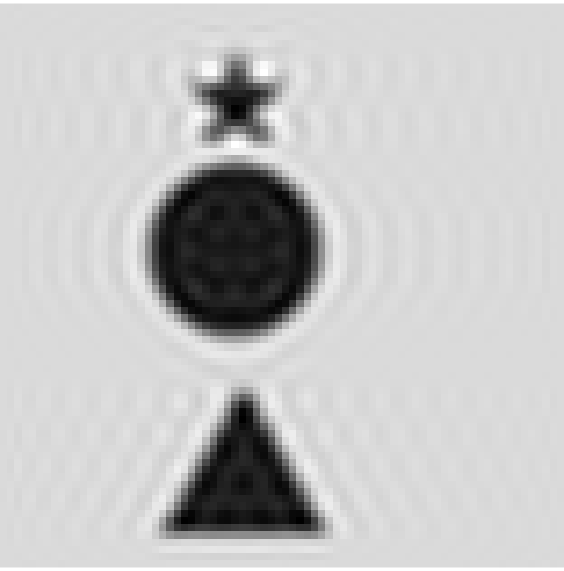}}
\end{minipage}%
\hspace{2em}
\begin{minipage}{.35\linewidth}
\centering
\subfloat[ \small \sl 320 generic beams]{\label{CR:c}\includegraphics[scale=.2]{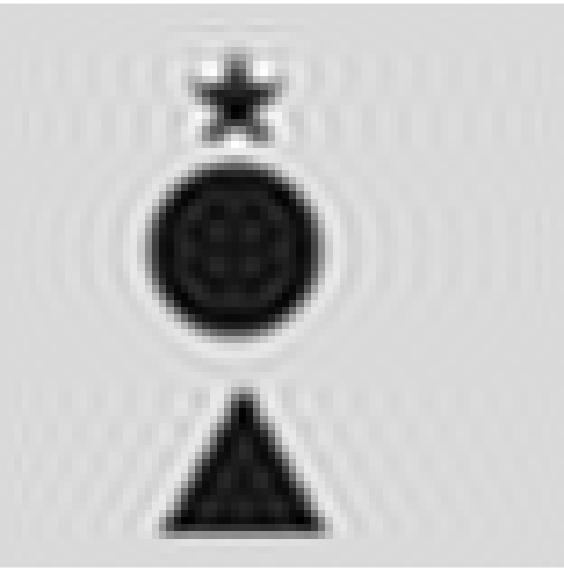}}
\end{minipage}
\par\medskip
\begin{minipage}{.35\linewidth}
\centering
\subfloat[\small \sl 160 generic beams]{\label{CR:d}\includegraphics[scale=.2]{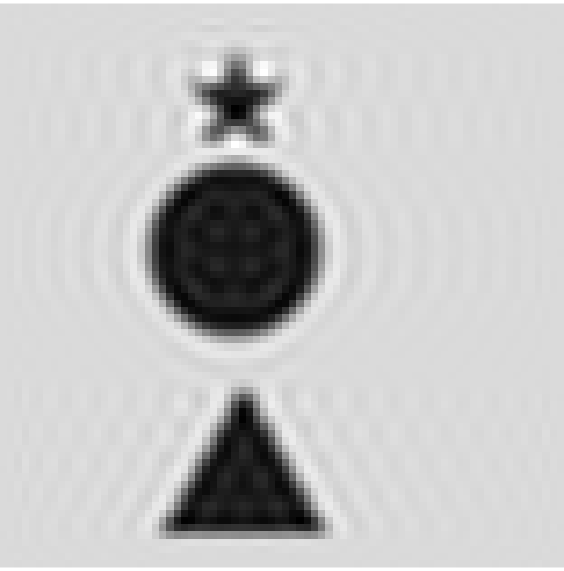}}
\end{minipage}
\hspace{2em}%
\begin{minipage}{.35\linewidth}
\centering
\subfloat[\small \sl 80 generic beams]{\label{CR:e}\includegraphics[scale=.2]{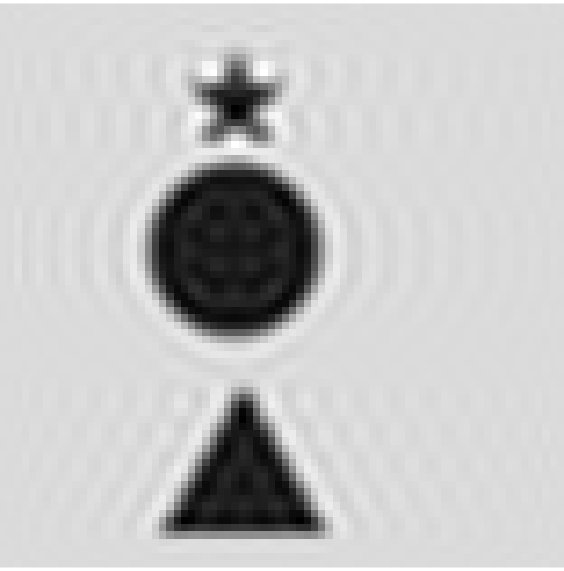}}
\end{minipage}

\caption{\small \sl Aperture coded imaging for images at a constant range. (a) represents the conventional method, which uses 1100 beams. (b),(c) and (d)  show the reconstruction results using only $320$, $160$ and $80$ generic linear combinations of the antenna array outputs.  }
\label{fig:CRmain}
\end{figure}

As an example, consider an imaging setup where the array is two dimensional and has $40 \times 40$ sensors with sensors placed $\lambda/2 = 3.75$cm apart and a target scene having a span of $[-45^{\circ}, 45^{\circ}]$ in both elevation and azimuthal angles. Let the scene have \textit{delta thickness}: only one reflector per each angle, present at a constant known depth. Standard ways of imaging such a scene would require around $1100$ beams at wavelength $\lambda = 7.5$cm. By introducing bandwidth in the excitation signal, we show that the scene can be imaged with as few as $80$ spatial linear combinations. This is illustrated in Figure \ref{fig:CRmain}. For target scenes with higher range limits, standard imaging methods also need bandwidth for imaging \cite{6600891}. We however show that this bandwidth can be used to obtain similar gains in the number of spatial linear combinations and provide theoretical justification for the gains that aperture coding/subsampling can provide.

\section{Related work}
\label{sec:rel_work}


Array imaging has been addressed in a number of previous works.  Standard array imaging problems have been considered in \cite{6600891, 148610} where the inverse problem is set up using transmit and receive beamforming with narrowband excitation. The authors also describe the fundamental limits on the resolution of array imaging systems arising due to finite apertures and discrete sampling of the array. General 3D imaging with wide-band excitation is considered in \cite{942570}. In general, to identify a 3D scene, a 2D antenna array and wideband excitation are necessary. Our focus is on a different dimension of this problem: we identify the limited number of degrees of freedom in a range-limited image and explore an alternate way using fewer measurements in  which such a scene can be captured. 

\par As arrays get larger, or when arrays need to be used in low-cost commercial applications, it is desirable to reduce the reduce the number of elements in an array and to use smarter algorithms to reduce the cost of the system. By sequentially using different parts of the array, one can obtain a set of low resolution images and propose to achieve reconstruction by upsampling and summing these images~\cite{1397349, 1397350}. In another approach, using a carefully designed non-uniform array, \cite{5456168} proposes to increase the number of resolvable directions to $O(N^2)$ by using an array with $O(N)$ elements. This approach enables the use of specially designed arrays to solve the problem of direction of arrival of source signals in passive sensing scenarios. Another main theme in reducing the cost of array systems has been the use of compressed sensing techniques. Reducing sampling rates at the sensors for digital beamforming is proposed in \cite{6854941, 6863846, 6203608}, but the number of beams/ array elements remain the same as conventional imaging. In a slightly different application, \cite{Wiaux21052009} imposes sparsity based regularization to solve the ill-posed radio-interferometric imaging problem. A similar theme is also followed in \cite{7833048} which addresses the same problem of 3D imaging with a 2D phased array, but assumes sparsity in the image domain. The trade-off we propose has a different flavor: we demonstrate that range-limited images have a limited number of degrees of freedom and impose no further models such as sparsity or total variation on the image. Unlike in \cite{6203608, 6863846, 6854941}, we directly address the number of measurements/array elements rather than sampling rates at each element.

The problem of understanding the degrees of freedom in the context of active imaging of range-limited scenes is a particular case of the phenomenon of simultaneous concentration of energy of a signal in spatial/temporal and spectral domains. Signals which are concentrated maximally in a given time interval and a frequency interval are well-approximated using a subspace of dimension approximately equal to the product of the lengths of the intervals. This has been well studied for one dimensional signals  \cite{1454379, BLTJ:BLTJ3976, BLTJ:BLTJ3977,BLTJ:BLTJ3279,KZWRD_FST_2019,ZKWDR_Roast_2018,ZKDRW_eigenvalue_2018}. Spatio-spectral concentration in two dimensions has also been studied in \cite{Simons2010,Simons2011}.
We study this phenomenon in the case antenna array imaging and propose methods to drive the number of measurements close to the number of degrees of freedom, in contrast to conventional imaging methods that collect far more number of measurements.

We model our approach as a novel matrix sketching problem and provide guarantees on the number of measurements/beams required to image a far-field target. Matrix sketching refers to a set of techniques in numerical linear algebra for dimensionality reduction. In particular, a given large matrix is pre- or post-multiplied by a suitable randomized matrix to reduce the ambient dimensions while still retaining the required information, such as the Euclidean distance between elements in its column space. Research in this area has seen a rise in popularity due to its utility in solving large problems in numerical linear algebra~\cite{7355313, woodruff2014sketching }. Our theoretical results are most closely aligned with those in~\cite{Nhalko}, where random Gaussian projections are used to capture the range space of linear operators. In the context of active imaging, usage of sketching ideas can be seen in \cite{doi:10.1093/mnras/stx531} where the Fourier basis is used as a sketching matrix for dimensionality reduction in interferometry. But the sketching matrix in \cite{doi:10.1093/mnras/stx531} is a generic rectangular sketching matrix. In the context of wideband array imaging, the sketching matrix involved itself has a very particular structure that is dictated by the physical setup, which introduces new challenges. The work presented in the remainder of this paper is an extension of ideas first presented in a preliminary form in~\cite{SDR_Localized_2018,SDR_Sketching_2017}.
\section{Active array imaging}
\label{sec:std_array_imaging}
In this section, we describe the standard setup and measurement models of active array imaging. While doing so, we highlight some facts about broadband array imaging that form the  basis of our contribution. In particular, we will emphasize that broadband array measurements provide a set of bandlimited Fourier domain samples of the target scene. We also describe some standard acquisition methods used to physically collect these samples. In the subsequent sections, we show that these bandlimited samples are highly structured for range-limited target scenes. 

\subsection{Propagation model and Fourier domain samples}

\label{subsec:broadband_imaging}

\begin{figure}[!tbhp]
\centering
\begin{minipage}[b]{0.48\textwidth}
\centering
\subfloat[]{\label{fig:CS1d}\includegraphics[scale=0.45]{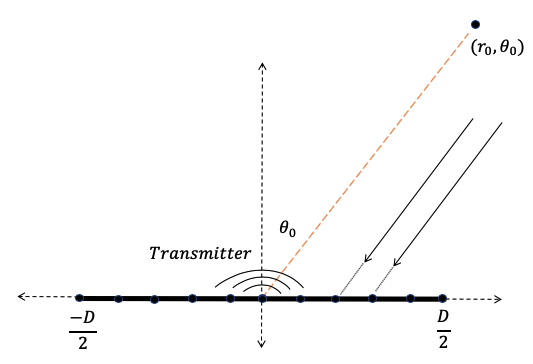}}
\end{minipage}
\hfill
\begin{minipage}[b]{0.48\textwidth}
\centering
\subfloat[]{\label{fig:CS}\includegraphics[scale = 0.3]{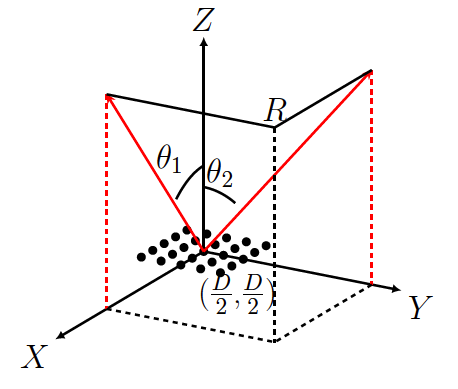}}
\end{minipage}
\caption{\small \sl  (a) shows the array imaging setup in the 1D case. (b) shows  the coordinate system used for the 2D array setup. $R$ represents a point on the target image. The antenna array, illustrated using the set of dots, lies in the X-Y plane in the region $[\frac{-D}{2}, \frac{D}{2}] \times [\frac{-D}{2}, \frac{D}{2}]$.}
\label{fig:setup}
\end{figure}

We first consider the broadband array imaging problem with a one-dimensional (1D) antenna array. This can be easily extended to two dimensions. Consider a uniform linear array of aperture length $D$ placed on the $x$-axis, spanning $[-D/2, D/2]$. For now, we assume that the aperture is continuous. The phase center is taken at the origin: i.e., the time delay of arrival at each element is measured with respect to the element at the origin. We assume that the scene to be imaged lies in the X-Y plane and is in the far-field region of the array. Imaging the scene is equivalent to reconstructing its reflectivity map, which is a function of the roundtrip distance of the target from the array center and the angle from the broadside, to be denoted as $p(r,\theta)$. The system consists of only one transmitting element which is co-located with the receiver at the array center. Let $t$ be the continuous time index. For an excitation signal $s(t)$, the signal received at the array location $d$ is $s(t - r_0/c - d\sin \theta_0 /c)$ for a unit-strength reflector at $(r_0, \theta_0)$. For a general reflectivity map, the narrowband response at this location for the excitation signal $s(t) = e^{j2\pi ct/\lambda}$ is 
\begin{align}
\widetilde{y}_{d,\lambda}(t) = \e^{\j 2\pi ct/\lambda}\! \int\limits_{-\pi/2}^{\pi/2}\int p(r,\theta)\e^{-\j 2\pi (r +{d\sin\theta })/\lambda}\, d r\,d\theta.
\label{eq:narrowband_eqn}
\end{align}

By making the substitution $\tau= (\sin\theta)/2$, the complex amplitude of the signal received at location $d \in [-D/2, D/2]$ for excitation wavelength $\lambda$ can be written as 
\begin{equation}
	y_{d,\lambda} =  \int\limits_{-\pi/2}^{\pi/2}\int p(r,\theta)\e^{-\j 2\pi (r +{d\sin\theta })/\lambda}\, d r\,d\theta = \widehat{x}_c\left(\,\omega_r , \, \omega_\tau \right).
\end{equation}
where $\omega_r = \frac{1}{\lambda}, \ \omega_\tau = \frac{2d}{\lambda}$, and $\widehat{x}_c$ denotes the Fourier transform of $x_c(r,\tau) = \frac{p(r,\sin^{-1}(2\tau))}{\sqrt{1 - 4\tau^2 }}$. This shows that the antenna aperture measures the Fourier transform of (a warped version of) the target scene. $y_{d,\lambda}$ can be obtained by computing the Fourier transform of the temporal signal $\widetilde{y}_{d,\lambda}(t)$ received at the array after sampling it at a suitable rate, or by measuring the complex amplitude of the received signal. As a direct consequence of the finiteness of the aperture,we have that at any excitation wavelength $\lambda$, the accessible interval in the Fourier domain for $\omega_{\tau}$ is limited to $[-D/\lambda, D/\lambda]$.

\par  For an imaging system with a 2D array in the X-Y plane and a 3D scene, the extension of the setup is straightforward to derive. The coordinates in 3D can be denoted as $(r, \theta_1, \theta_2)$, where $r$ is the roundtrip distance to the array center, $\theta_1$ is the angle with respect to the Y-Z plane, and $\theta_2$ is the angle with respect to the X-Z plane, as shown in Figure \ref{fig:CS}. The scene reflectivity is denoted as $x_c(r, \tau_1, \tau_2)$ where $\tau_1 = (\sin\theta_1)/2$ and $\tau_2 = (\sin\theta_2)/2$.  At excitation wavelength $\lambda$, the 2D array outputs $y_{d_1,d_2, \lambda} = \widehat{x}_c(1/\lambda, \, 2d_1/\lambda, \, 2d_2/\lambda)$ are samples of the Fourier transform of the scene sampled in the region bounded by  $\big ( \frac{\pm D_1 }{\lambda}, \frac{\pm D_2 }{\lambda} \big )$, where $D_1$ and $D_2$ are the dimensions of the 2D array.  From now on, we use the 1D array to discuss our model for the sake of notational brevity. However, all our discussion and results hold for both cases and all simulations use 2D arrays.

We are mainly interested in a broadband excitation scenario. We assume that the excitation signal is a broadband pulse  with bandwidth in the interval  $[\lambda_{\min}, \lambda_{\max}]$. If the broadband signal used is $s_b(t)$, then the received signal at location $d$ is 

\begin{equation}
\widetilde{y}_{d}(t) =  \int\limits_{-\pi/2}^{\pi/2}\int p(r,\theta)s_b(t - r/c - d\sin\theta)\, d r\,d\theta.
\end{equation}

\begin{figure}[!tb]
\centering
\begin{minipage}[b]{0.48\textwidth}
\centering
\subfloat[]{\label{fig:wedge_a}\includegraphics[scale = 0.4]{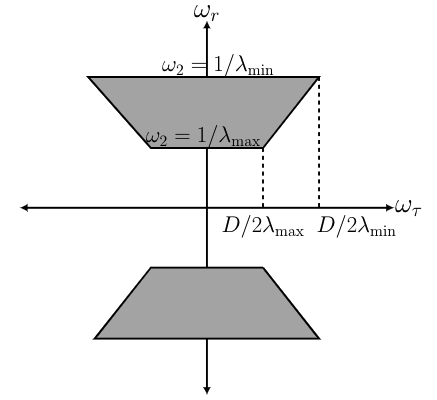}}
\end{minipage}
\hfill
\begin{minipage}[b]{0.48\textwidth}
\centering
\subfloat[]{\label{fig:wedge_b}\includegraphics[scale=0.35]{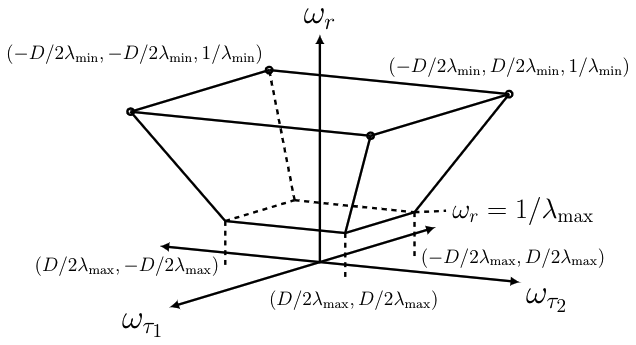}}
\end{minipage}
\caption{\small \sl  (a) shows the region in the Fourier domain of a target scene acquired by using broadband excitation and a finite 1D aperture. (b) shows the same for a 2D array imaging a 3D scene.}
\label{fig:wedge}
\end{figure}

The complex amplitudes at different wavelengths, $y_{d, \lambda}$ can then be obtained as the Fourier coefficients of  $\widehat{y}_{d}$ after taking its temporal Fourier transform. Since the lateral frequency support increases with decrease in the excitation wavelength, the accessible Fourier domain has a trapezoidal shape as illustrated in Figure \subref*{fig:wedge_a}. Similarly, for a 2D antenna, the region in the Fourier domain measured is shown in Figure \subref*{fig:wedge_b}.

\par It is now clear that a finite aperture and a finite bandwidth of excitation together offer bandlimited measurements of the target scene. If  $\mathcal{F}$ represents the continuous domain Fourier transform and $\mathcal{S}$ represents the bandlimiting operator supported only in this accessible window, the continuous aperture measurements $y_c = \{y_{d,\lambda} \}$ can be modeled as 

\begin{equation}
\label{eq:contin_repn}
y_c = \mathcal{S}\mathcal{F}x_c = \{\widehat{x}_c(\omega_\tau, \omega_r): (\omega_\tau, \omega_r) \in \mathcal{W} \}
\end{equation} where $\mathcal{W}$ represents the trapezoidal region shown in Figure \ref{fig:wedge}. Broadband imaging is hence the task of collecting Fourier measurements in a bandlimited region and inferring the target reflectivity profile using these measurements.

\subsection{Array measurement model}
\label{subsec:array_measurement_model}

\label{sec:array_model}

Our discussion above describes what can be observed through a finite aperture. In practice, we must measure this signal using a discrete array of sensors. This limits the Fourier domain measurements considered in the previous section to only a discrete set of samples. Let the 1D antenna considered in the previous section be an array of $M$ discrete antenna elements placed uniformly at coordinates $d_{\frac{-M}{2}}, \cdots, d_{\frac{M}{2}} \in [-D/2, D/2]$. For ease of explanation, we consider a discrete set of $K$ excitation wavelengths $\{ \lambda_1 = \lambda_{\min}, \lambda_2, \cdots, \lambda_K = \lambda_{\max} \}$. When the complex amplitudes are measured at these wavelengths, the Fourier samples obtained are located on a pseudopolar grid, as shown in Figure \ref{fig:ppgrid}. Measurements at these wavelengths can be obtained by using a single broadband excitation, as explained earlier. The set of all measurements $\{y_{m, \lambda} \}$ where $m = 1,\cdots,M$, $\lambda \in \{\lambda_1, \cdots, \lambda_K \}$ can be denoted by a vector  $ y \in \C^{MK}$. The set of measurements at each wavelength $\lambda$ is denoted by $y_\lambda \in \C^M$. 
\begin{figure}[!t]
\centering
\includegraphics[scale=0.22]{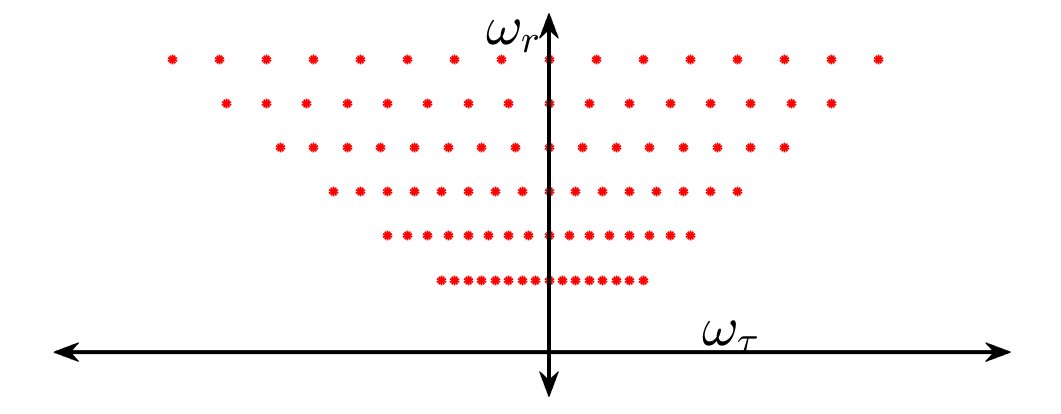}
\caption{\small \sl With a discrete array, we can only collect discrete samples in the Fourier domain of the target scene. These samples are placed uniformly along the lateral frequency axis and and at regular intervals on the vertical frequency axis. Each sample denotes the output at a single array element, at a single excitation frequency. The limits for $\omega_{\tau}$ at each $\omega_r$ depends on the aperture size and spans the region $[-\omega_r D, \ \omega_r D]$. A discrete set of temporal wavelengths is assumed instead of a continuous support.}
\label{fig:ppgrid}
\end{figure}

Traditional imaging involves collecting the samples shown in Figure \ref{fig:ppgrid} to reconstruct the target reflectivity profile. This can be achieved in many ways. One way is to directly read out the output of each antenna element. This amounts to measuring $y$ directly. Physically, this can be realized by using a broadband pulse as before and then taking a Fourier transform. Alternatively, one could cycle through a set of narrowband excitation signals (stepped frequency excitation) and collect array measurements at each wavelength. In either case, this method would require $M$ array element readouts. 

Another standard method for acquiring the measurements is to collect linear combinations of the array outputs. In this method, the output of each array element is weighted and then added to the output of other elements. In particular, the procedure known as \textit{beamforming} obtains specific linear combinations of the array outputs that induce spatial directivity. In narrowband beamforming, the weights are chosen such that the time delays for signals coming from a particular spatial direction are compensated for, hence ``focusing" the array in that physical direction, as outlined in \cite{6600891}. Instead of acquiring $M$ direct read-outs, $M$ linear combinations are acquired.  

Acquiring linear combinations at each excitation wavelength  separately is equivalent to acquiring linear combinations of samples along each row of points in Figure \ref{fig:ppgrid}. A different set of linear combinations may be used for each excitation wavelength. Mathematically, if the excitation wavelength is $\lambda$, the measurements made in time domain are 

\begin{equation}
\widetilde{z}_{i,\lambda}(t) = \sum_{m=1}^M  \phi_{\lambda}(i,m) \widehat{y}_{m,\lambda}(t)
\end{equation}  for $i = 1,\cdots, M$. The vector of complex amplitudes $z_\lambda \in \C^M$ is then given by 

\begin{equation}
z_{\lambda} = \phi_{\lambda} y_\lambda
\label{eq:nbf}
\end{equation} where $\phi_{\lambda}$ is in general an $M \times M$ well-conditioned matrix whose $(i,m)^{\text{th}}$ element is $\phi_{\lambda}(i,m)$. 
If a single broadband pulse is used for excitation, then the set of weights for the linear combinations at different wavelengths are constrained to be the same and the vector of complex amplitudes at different wavelengths are given by:

\begin{equation}
z_{\lambda} = \phi y_\lambda
\label{eq:bbf}
\end{equation} where $\phi$ is now common across all wavelengths. In general, we refer to acquiring linear combinations of the array elements as \textit{coded aperture acquisition}. 

Yet another way of acquiring the samples in Figure \ref{fig:ppgrid} is to use coded frequency excitation where the excitation signal is a linear combination of various wavelengths. If the array elements are directly read out, then the measurements obtained can be interpreted as taking linear combinations of the samples along the radial lines in Figure \ref{fig:ppgrid}. 

A variant of aperture coding was considered in \cite{1397349} where the authors propose using different subarrays at different times and then using interpolation techniques to acquire all the samples shown in Figure \ref{fig:ppgrid}. Subsampling the array is equivalent to using binary codes on the aperture. However, their signal model is different from ours and hence the paper does not consider imaging with fewer than $M$ measurements.  

We consider a particular signal model: we assume that the target to be imaged is range-limited. With this model, we study the number of array measurements required to image the scene when broadband excitation is used. We will show that coded apertures (\eqref{eq:nbf} and \eqref{eq:bbf}) can be used with broadband excitation to achieve highly efficient imaging of range-limited target scenes. Unlike standard methods that require $M$ direct read-outs or $M$ linear combinations of the array outputs, we will use far fewer ``generic" linear combinations to image range-limited scenes. In other words, we show that when the target scene is range-limited, the coding matrices $\phi$ (or $\phi_{\lambda}$) can be highly underdetermined, without assuming any sparsity in the scene.

\section{Signal model and degrees of freedom}
\label{sec:sig_model}

In this section, we introduce our signal model and make some observations about the model. These observations, a signal acquisition scheme based on them, and theoretical guarantees on the proposed signal acquisition method are our main contributions.

\subsection{Array measurements of range-limited scenes}
\label{subsec: array_measurements_RL_scenes}
Our signal model is that of a range-limited scene.  A finite bandwidth and aperture allow us to observe only a part of the Fourier transform of the image. A finite range restricts the number of degrees of freedom of this observed region of the Fourier transform. We intend to take advantage of this to achieve a more efficient sampling of the bandlimited spectrum of the image and achieve hence faster imaging.

Briefly switching back to the setting of a continuous aperture, we can obtain an analog of \eqref{eq:contin_repn} for range-limited targets. If $\mathcal{Q}$ represents the self-adjoint operator that truncates an image to a range limit $R$, the continuous aperture measurements can be modeled as

\begin{equation}
\label{eq:cont_repn_rl}
y_c = \mathcal{S}\mathcal{F}\mathcal{Q}x.
\end{equation} The discrete array measurements can be modeled as

\begin{align}
	y_{m}(\lambda) & =  \int\limits_{-\pi/2}^{\pi/2}\int\limits_{R} p(r,\theta)\e^{-\j 2\pi (r +{d_m\sin\theta })/\lambda}\, d r\,\d\theta.
\end{align} 

We now demonstrate the effect of range-limitedness using a target scene with \textit{delta} thickness present at a constant known range: where the scene has only one reflector per angle, with each reflector present at a constant known distance from the array. The underlying effect on the Fourier domain samples extends to scenes with a more general range limit. 

Consider a scene with delta thickness at a constant range $R_0$ from the antenna array. The time domain outputs at different wavelengths can be modeled using \eqref{eq:narrowband_eqn} as:

\begin{align}
\label{eq:CR_cont_repn}
\widetilde{y}_{m,\lambda}(t) = \e^{\j 2\pi (ct+R_0)/\lambda} \! \int\limits_{-1/2}^{1/2} p(R_0,\tau)\e^{-\j 2\pi \omega_{\tau}\tau}
 \,\d\tau.
\end{align}

Considering just the amplitude as before, we have
\[ 
y_m({\lambda}) = \int\limits_{-1/2}^{1/2} p(R_0,\tau)\e^{-\j 2\pi 2d_m \tau/\lambda}\,\d\tau.
\]
Now define 
\[ 
g(\omega_{\tau}) = \int\limits_{-1/2}^{1/2} p(R_0,\tau)\e^{-\j 2\pi \omega_{\tau} \tau}\,\d\tau.
\]
The array outputs $y_{m}( \lambda)$ are then just \textbf{ samples of the same function} $g(\omega_{\tau})$ sampled uniformly in $[-D/\lambda,D/\lambda]$ (modulo known  scaling factors). This is illustrated in Figure \ref{fig:ppgridCR}. In the 2D array case, the ``slices" of the trapezoid corresponding to different excitation wavelengths sample a common function. As the range limit increases but remains finite, the functions sampled at different wavelengths start to differ, but still have limited degrees of freedom.

\begin{figure}[!tb]
\centering
\includegraphics[scale=0.27]{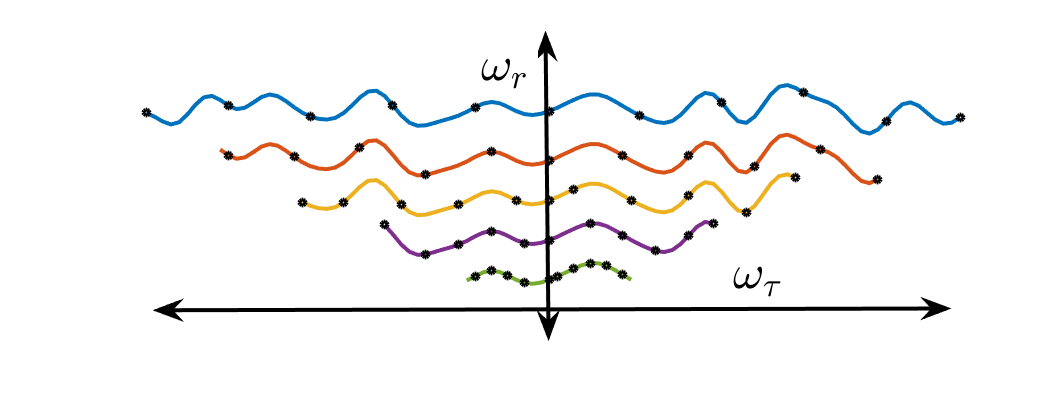}
\caption{\small \sl For an image at a constant range, the array measurements at different wavelengths are samples of a common function}
\label{fig:ppgridCR}
\end{figure} 
 
From the above discussion, it is clear that for scenes with delta thickness, collecting the full set of samples at the lowest excitation wavelength provides all  available information. Collecting further samples at higher wavelengths offers no advantage. However, this redundancy can be used to collect fewer spatial samples but with broadband excitation. Similarly, collecting the full set of broadband measurements for scenes with higher but finite range limits results in a number of measurements greater than the number of degrees of freedom. It is thus natural to expect that the target scene can be reconstructed with a number of measurements $l \ll M$, owing to the limited number of degrees of freedom.

For computational purposes, we can discretize the target scene and the array imaging operator. For a scene with delta thickness at a depth $R_0$, let $x_{R_0} \in \R^N$ denote the target scene $x_c$ sampled uniformly with $N \geq M$. Similarly, the integral mapping the target scene to the array measurements at wavelength $\lambda$ can be discretized as a matrix $A_{R_0,\lambda} \in \C^{M \times N}$ operating on $x_{R_0}$:

\begin{equation}
A_{R_0,\lambda}(m,n) = \exp^{-j2\pi R_0/ \lambda}\exp^{-i2\pi 2d_m (-0.5+ n/N)/\lambda}.
\label{eq:A_r0}
\end{equation}With this notation in place, the array measurements can be expressed as

\[ y_{R_0,\lambda} = A_{R_0,\lambda} x_{R_0}. \]

Similarly, a more general scene with a range limit $R$ can be discretized as a vector of reflectivities as $x_R \in \R^{N \times D}$ where $N$ represents the number of discrete samples along the $\tau$ axis and $D$ the number of samples along the $r$ axis. Let the scene lie between the range limits $R_{\min}$ and $R_{\max}$. Define $d_r = \lfloor n/N + 1 \rfloor, \ n_{\tau} = \mod(n,N)$. Then the discretized array operator can be expressed as a matrix $A_R \in \C^{M \times ND}$ as

\begin{equation}
A_{R,\lambda}(m,n) = \exp^{-j2\pi (2d_m (-0.5+ n_{\tau}/N) + (R_{\min} + d_rR/D))/\lambda}
\label{eq:A_R}
\end{equation} 
and the array outputs can be expressed as 
\[y_{R,\lambda} = A_{R,\lambda}x_R. \]

Since our signal model considers only range-limited scenes, we drop the subscripts $R$ and $R_0$ for further discussion. We will denote antenna array measurements at excitation wavelength $\lambda$ as $y_{\lambda} \in \C^{M}$, the discretized target scene as vector $x_0 \in \R^{ND}$ (or $\R^N$ for scenes with delta thickness) and the array imaging operator as $A_{\lambda} \in \C^{M \times ND}$ (or $\C^{M \times N}$ for scenes with delta thickness). $A_{\lambda}$ will incorporate knowledge of the target range profile, which is assumed to be known a priori. This helps us focus on the advantage of range-limitedness in itself. We later show how unknown range profiles can be handled algorithmically.  

When the measurements at all the $K$ wavelengths are considered, we obtain the linear system

\begin{equation}
\label{eq:broadband_discrete}
y = \begin{bmatrix}
y_{\lambda_1} \\
y_{\lambda_2} \\
\vdots\\
y_{\lambda_K}
\end{bmatrix} = \begin{bmatrix}
A_{\lambda_1}\\
A_{\lambda_2}\\
\vdots \\
A_{\lambda_K}
\end{bmatrix}x_0 = Ax_0.
\end{equation} The collection of array outputs at all the wavelengths lie in the column space of $A$. The effective dimension of this subspace determines the number of degrees of freedom in $y$. 
When the measurements are acquired with a coded aperture, we can model the measurements as
\begin{equation}
\label{eq:ap_coded_measurements}
z = \Phi y =  \begin{bmatrix}
\phi & 0 & \cdots & 0\\
0 & \phi & \cdots & 0 \\
\vdots & \vdots & \ddots & \vdots \\
0 & 0 & \cdots & \phi
\end{bmatrix}  \begin{bmatrix}
y_{\lambda_1} \\
y_{\lambda_2} \\
\vdots\\
y_{\lambda_K}
\end{bmatrix}  = \Phi Ax_0.
\end{equation}
Our main goal is to show that the target scene can be constructed using highly underdetermined matrices $\Phi$ with no loss in resolution compared to the reconstruction obtained using the full set of measurements $y$.

\subsection{Degrees of freedom of range-limited scenes}

\begin{figure}[!tb]
\centering
\includegraphics[scale=0.22]{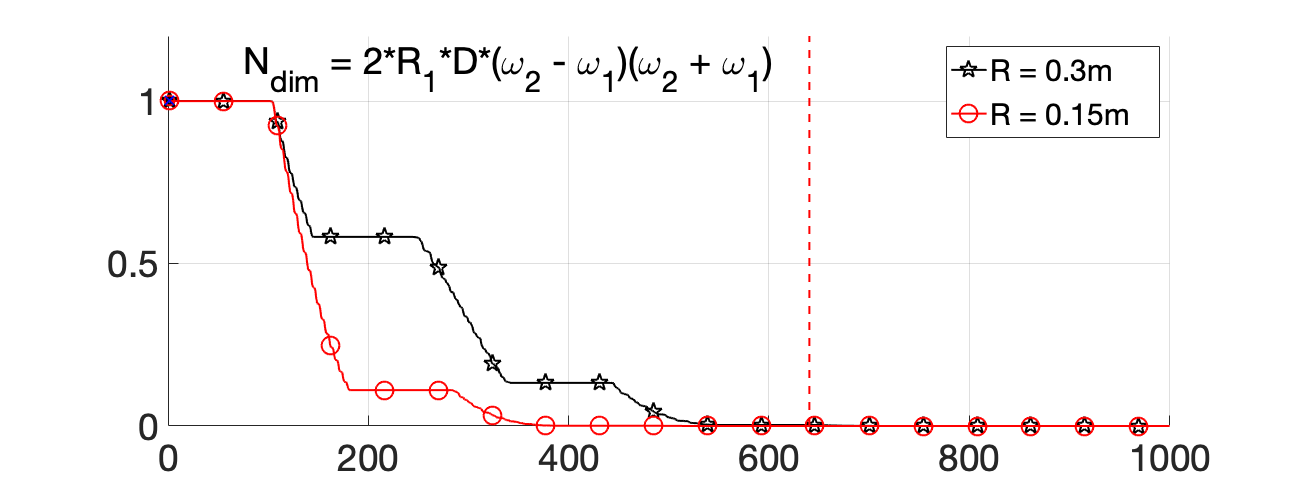}
\caption{\small \sl Eigenvalue decomposition of operator $A$ in the 1D array case for a range limit of $R$, aperture size $D$ and excitation frequency range of $[\omega_1/c, \ \omega_2/c]$. $R1 > R2 > R3$}
\label{fig:svd_multiR}
\end{figure}

In \ref{subsec: array_measurements_RL_scenes} we observed that bandlimited samples of range-limited target scenes have limited number of degrees of freedom. This is a consequence of simultaneous concentration of energy in both spatial and spectral domains. Signals with such a property can be well approximated by a number of basis functions that is proportional to the product of the area/volume of the spatial and spectral supports. This has been studied in a set of seminal papers by Slepian, Landau and Pollock \cite{1454379, BLTJ:BLTJ3976, BLTJ:BLTJ3977, BLTJ:BLTJ1037, BLTJ:BLTJ2104, BLTJ:BLTJ3279} and by Simons et.al., in \cite{Simons2010, Simons2011}. In essence, the range space of space-limiting, band-limiting operators such as that in \eqref{eq:cont_repn_rl}  is approximately finite dimensional. This approximate dimension is the number of effective degrees of freedom of signals well concentrated in spatial and spectral domains. An efficient basis for the representation of such signals is the \textit{prolate} basis \cite{1454379}.  Smaller the spatial and spectral supports, smaller is the number of degrees of freedom. See~\cite{KZWRD_FST_2019,ZKWDR_Roast_2018,ZKDRW_eigenvalue_2018} for a quantitative non-asymptotic characterization of these properties in the discrete case.

Since the array imaging operator is an approximation of the continuous domain model in \eqref{eq:cont_repn_rl}, its range space has a low dimensional structure. To illustrate this, we present the singular values of $A$ for target images with a finite range limit in Figure \ref{fig:svd_multiR}. The product of spatial and spectral supports in this case is approximately $2RD(\omega_2 - \omega_1)(\omega_2 + \omega_1)$. For these plots, an array with $213$ elements was used and samples were collected at $25$ wavelengths placed uniformly between $2$GHz and $4$GHz. The total number of samples collected is hence $5325$, but these samples lie in a subspace of dimension approximately only $640$ or $320$, for the two range limits considered.  For the same antenna array, we show the singular values of the operators associated with target scenes that have delta thickness in Figure \ref{fig:svd_delta}. As expected, such scenes lie subspaces of even smaller dimensions. 

\begin{figure}[!tb]
\centering
\includegraphics[scale=0.22]{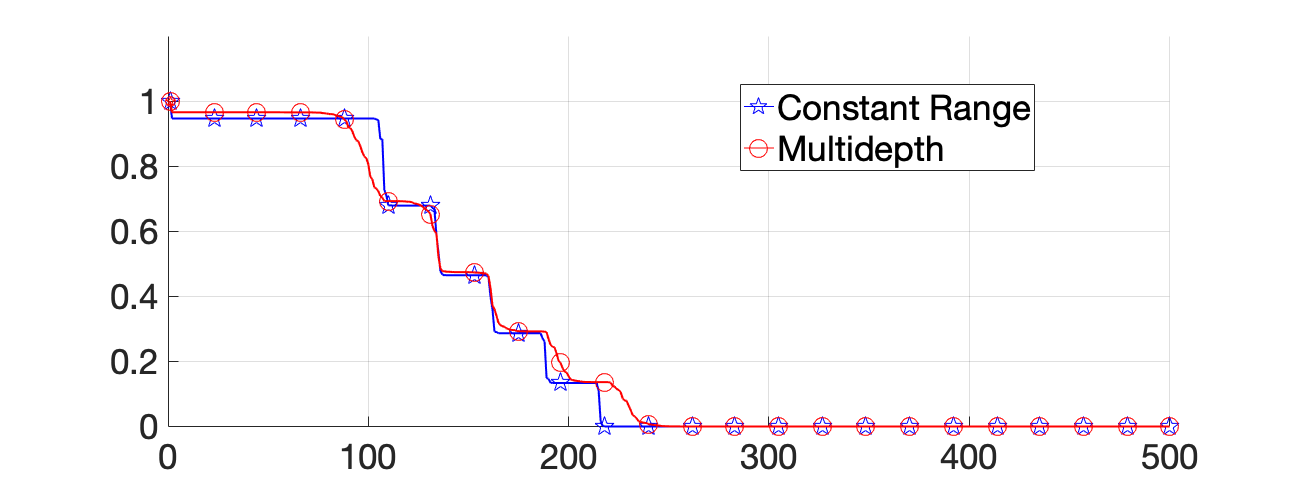}
\caption{\small \sl Spectra of the operators for scenes with delta thickness. `Constant range' describes a scene where the reflectors at all angles are at a constant depth from the array center and `Multi-depth' describes a scene in which each reflector is at a different depth from the array. In each case, the infinite dimensional continuous domain image can be efficiently represented using a subspace of relatively small dimension}
\label{fig:svd_delta}
\end{figure}

\begin{figure}[!tb]
\centering
\includegraphics[scale=0.35]{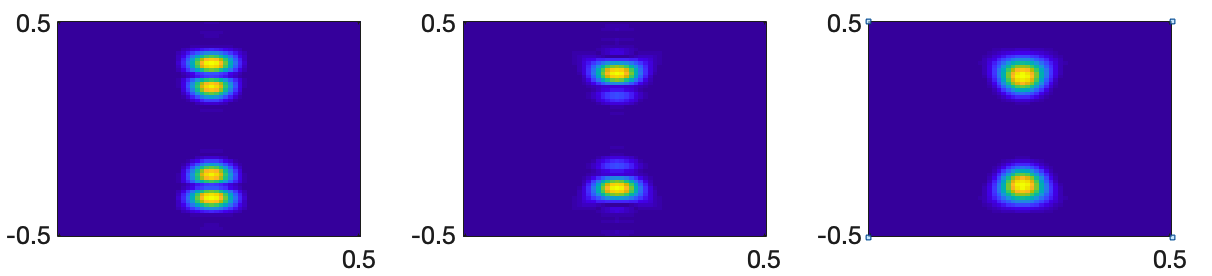}
\caption{\small \sl Spectra of the top eigenvectors of $A^TA$. The eigenvectors have a frequency support that is mostly concentrated in the trapezoidal area and a spatial support limited to $R$ as defined by the model.}
\label{fig:spectra}
\end{figure}

A special case of range-limitedness is when the target has delta thickness. Figure \ref{fig:svd_delta} shows the singular value decomposition of the array operator for two examples of such scenes: one where all the reflectors are at a constant known distance from the array center; the other where each reflector is at a known but different distance from the array center.  We also show the Fourier transforms of the eigenvectors of $A^TA$ associated with the most and least significant eigenvalues in Figure \ref{fig:spectra}.

The low dimensional structure in the array measurements forms the basis of our proposed imaging method. We show that this structure enables imaging with very few array measurements. In the following sections, we set up the reconstruction problem and provide theoretical guarantees for image reconstruction.

\section{Coded aperture image reconstruction}
\label{sec:coded ap acq}

\subsection{Image reconstruction method}
\label{subsec:img_recon}
The limited number of degrees of freedom of broadband measurements leads us to ask the following question: what is the number of broadband measurements required to image range-limited target scenes? In order to answer that question, we first set up our reconstruction method. We do not assume any structure on the target scene such as sparsity, low total variation norm, or other structure apart from it being range-limited. Hence, with the measurement model used in \eqref{eq:ap_coded_measurements}, we solve the ordinary least squares problem
\begin{equation}
\label{eq:ap_coded_reconstruction}
\underset{x}{\argmin}{\|\Phi y - \Phi Ax \|_2^2}.
\end{equation}
Our standard of comparison is the ordinary least squares estimate obtained using the full data:
\begin{equation}
\label{eq:OLS}
\underset{x}{\argmin}\|y - Ax \|_2^2.
\end{equation}
The theoretical question we answer is: If each diagonal block $\phi$ in \eqref{eq:ap_coded_measurements} is of size $l \times M$, what is the sample complexity $l$ that can achieve the same reconstruction result as \eqref{eq:OLS} using \eqref{eq:ap_coded_reconstruction}? 

\subsection{Sketched least squares systems}
\label{subsec:sketching}
Problems of the type \eqref{eq:ap_coded_reconstruction} are the subject of a recently popular area of research in the numerical linear algebra  community, known as matrix sketching \cite{woodruff2014sketching}. Sketching techniques use random projections of the rows (or columns) of a matrix $A$ to obtain approximations to the matrix. The new matrix obtained has much smaller dimensions and is used to solve the original linear algebra problem. Sketching can be applied to a multitude of problems such a linear regression, low rank matrix factorization, and subspace clustering among others.

\par 
If $y$ denotes measurements of an unknown signal $x$ observed through the linear operator $A$, then the least squares solution is given by \eqref{eq:OLS}. When $A$ is low rank ($\text{rank}(A) \ll \min(\tilde{M},N)$), $x$ can be solved for using fewer measurements. This forms the fundamental basis of the idea of sketching. Formally, if $S$ is a compressive matrix, sketching solves the following problem:
\begin{equation}
\label{eq:sketch}
x_{\text{SLS}} = \text{arg} \min_x \|Sy - SAx \|^2,
\end{equation}
where the subscript SLS stands for sketched least squares.
The key point of course is that $S A$ is now a much smaller matrix and thus the solution can be computed quickly. This technique has generated a lot of interest in the context of large scale problems where the ambient dimensions are much bigger than the inherent dimensionality. In a sensing system, this translates to reconstructing a target signal from compressed measurements, as common in compressed sensing systems. However, we point out that unlike standard compressed sensing setup, we do not assume sparsity or compressibility in the signal $x$. Hence, we aim to solve \eqref{eq:ap_coded_reconstruction} without  imposing any structure on $x$. 

Our proposed system can be modeled as a sketched linear system, but with a  particularly structured sketching matrix:
\begin{equation}
\label{eq1}
\Phi y = \Phi A x_0 = Yx_0,
\end{equation}
where $\Phi$ is a block diagonal matrix with $k$ repeated blocks
\begin{equation}
\label{eq:RBD}
\Phi = 
\begin{bmatrix}
\phi & 0 & \cdots & 0\\
0 & \phi & \cdots & 0\\
\vdots & \vdots & \ddots & \vdots \\
0 & 0 & \cdots & \phi
\end{bmatrix}.
\end{equation} We refer to such a matrix as a repeated block diagonal (RBD) matrix. 

\subsection{Linear algebraic interpretation of range-limitedness}

The least squares program in \eqref{eq:OLS} searches for an image that best explains the measurements obtained, and lies in the row space of $A$. The row space of $A$ is not only approximately low dimensional for range-limited images, but has a second tier of structure that makes aperture coding a very efficient way of obtaining fewer measurements. In particular, the relationship between the subspaces associated with the linear operators at various excitation wavelengths for range-limited images determines the number of aperture codes required. 

Consider array imaging at a single wavelength $A_{\lambda_i}$. The least squares estimate in this case looks for a target in the row space of $A_{\lambda_i}$ that best explains the measurements. We denote this subspace as $\mathcal{S}_i$. Similarly, imaging using wideband excitation results in an estimate that lies in the union of subspaces $\mathcal{S}_1, \mathcal{S}_2,  \cdots, \mathcal{S}_k$. The sample complexity of aperture coding depends highly on the relationship between these subspaces. 

Let $A$ now denote a general $km \times n$ matrix of a finite rank $r$ such that $A = \begin{bmatrix}
A_1^T & A_2^T & \cdots & A_k^T
\end{bmatrix}^T$ where each $A_i$ is $m \times n$. Let the rank of $A_i$ be $r_i$. Without loss of generality, we can assume that $r_i \geq r_j$ for $i > j$, since the ordering of the row groups does not matter. We can then obtain the following factorization:
\begin{equation}
\label{exp}
\begin{bmatrix}
A_1\\
A_2\\
\cdot\\
A_k
\end{bmatrix} =  \begin{bmatrix}
C_{11} & 0 & \cdots & 0\\
C_{21} & C_{22} & \cdots &0\\
\vdots & \vdots & \ddots & \vdots\\
C_{k1} & C_{k2} & \cdots & C_{kk}
\end{bmatrix}\begin{bmatrix}
V_1^T\\
V_2^T\\
\vdots\\
V_k^T
\end{bmatrix} = CV^T
\end{equation} where $C_{ij} \in \R ^{m \times d_j}$, each $C_{ii}$ is full column rank when $d_i \neq 0$, and $V$ is an $r \times n$ orthonormal matrix. The factorization is such that the row space of $A_1$ is the span of the orthobasis $V_1$, the row space of $A_2$ is included in the span of $V_1$ and $V_2$. In general, $[V_1 \ V_{2} \ \cdots V_i]$ includes an orthobasis for the row space of $A_i$. This factorization is equivalent to a block QR factorization of $A^T$ and can be obtained for any general matrix A. We are particularly interested in the $d_i$'s, as they capture the relationship between various subspaces. The diagonal blocks $C_{ii}$'s represent the energy in the subspace orthogonal to the union of the row spaces of the previous blocks $A_{1}, \cdots, A_{i-1}$. Hence, smaller values of $d_i$ indicate that the subspaces have a high degree of overlap.

A special case of high overlap is when the row spaces have a nested structure: $\row(A_{i-1}) \subset \row(A_{i}) \ \forall i = 2, \cdots, K$. In this case, the off-diagonal blocks $C_{ij}, i < j$ and the orthogonal blocks $V_{1}, \cdots, V_{i-1}$ capture a significant part of $\row(A_i)$. For low rank systems, this naturally leads to smaller values of $d_i$. In contrast, when the row spaces are all almost orthogonal, the $d_i$'s are all large and the off-diagonal blocks $C_{ij}, i \neq j \approx 0$. We will later show that the broadband array imaging operator has the nested subspace structure for certain range-limited scenes.
  
 Let us now relate the above factorization to the context of imaging. To begin, suppose that we use wavelengths up to $\lambda_i$. Then, $d_{i+1}$ represents the rank of the update required to incorporate information from a new, lower wavelength $\lambda_{i+1}$. It represents the \textit{innovation} added by the measurements at the new wavelength. 
 
 As the range limit decreases, the overlap between the subspaces $\mathcal{S}_i$ increases, increasing the redundancy across wavelengths. This leads to the possibility of higher subsampling rates in the physical array domain. In the limiting case of only one reflector per angle, the subspaces have a nested structure. As we will observe in Section \ref{sec:signalRecovery}, this plays a crucial role in determining the number of aperture codes required for successful imaging. 

Our theoretical results formally state the effect of the relationship between the subspaces $S_i$ on the number of aperture codes required for imaging. Theorem \ref{thm:main} provides a non-trivial estimate of the number of
measurements needed for a given excitation bandwidth. Theorem \ref{thm:manyK} provides conditions under which a given set of excitation wavelengths $\lambda_1, \cdots, \lambda_K$ allow the number of spatial measurements to be reduced by a factor of $k$. Using these conditions and a given bandwidth of excitation, one can choose the set of excitation wavelengths and very few coded measurements to achieve imaging with no loss in resolution.

\section{Signal recovery from aperture coded measurements}
\label{sec:signalRecovery}

In the previous section, we set up the aperture coding problem as a sketched least squares problem that has a particular structure dictated by the physical problem of array imaging. In this section, we derive mathematical guarantees for such sketched systems and provide estimates of the required sample complexity. 

We start by reviewing the conditions that any general sketching operator has to satisfy in order for the solution to the sketched least squares problem $x_{\text{SLS}}$ to be close to the solution of the original ordinary least squares solution $x_{\text{LS}}$. In the noiseless case
\begin{equation}
\label{eq:VVT}
x_{\text{LS}} = A^{\dagger}y = A^{\dagger}Ax_0 = VV^*x_0
\end{equation} 
where $A = U\Sigma V^*$ is the SVD of the linear operator and $A^\dagger$ denotes the pseudoinverse of $A$. \eqref{eq:VVT} shows that the least squares solution is a projection of the true solution onto the row space of $A$.  Hence, any sketching operator $\Phi$ should preserve the row space of $A$. It has been well established in literature that a number of random projections of the rows of $A$ greater than or equal to its rank capture the row space in case of exactly low rank matrices \cite{Nhalko}.  When $A$ of size $m \times n$ and $\text{rank}(A) = r \ll \min(m,n)$, if $\Phi$ is a $l \times m$ dense standard normal random matrix with $l \geq r$, then $\row(A) \subset \row(\Phi A)$.
Since by construction we also have $\row(\Phi A) \subset \row(A)$, we have
\[ V_{\Phi A} V_{\Phi A}^* - VV^* = 0. \] Let 
\[ {x}_{\text{SLS}} = Y^\dagger \phi y = V_{\Phi A}V_{\Phi A}^*x_0.\] The least squares estimate from the sketched measurements $\phi y$ is same as that from the full observation $y$, since \[\|x_{\text{LS}} - {x}_{\text{SLS}} \| \leq \| VV^* - V_{\Phi A}V_{\Phi A}^* \|\|x_0 \| = 0 .\] This idea forms the basis of using sketched measurements to solve a least squares problem. For any sketching matrix $\phi$, a necessary and sufficient condition in the noiseless case is $\|(I - P_{(\phi A)^*})A^*  \| = 0$.

Our goal is to replicate the result that $l \geq r$ is sufficient, but for an RBD matrix. For such a matrix (shown in  \eqref{eq:RBD}), the equivalent result would be that a total number of measurements $Kl \geq r$ suffice to capture the row space of the matrix $A$. However, due to the highly structured nature of a block diagonal matrix, such a result does not hold uniformly for all matrices $A$. We analyze the conditions on $A$ under which such a result holds and show that array imaging matrices do obey these conditions, thus allowing for spatial subsampling. 

\subsection{Least squares with a block diagonal sketching matrix: General case}

RBD matrices obtain \textit{localized} random projections: they take linear combinations of only a subset of the rows. In this section, we provide guarantees on the the error $\|(I - P_{(\Phi A)^*})A^*\|$ when $\Phi$ is an RBD matrix. The focus will be on the sample complexity $l$ required to drive this error to $0$ with high probability.  

It is immediately clear how to achieve this when we take $\displaystyle l \geq \max_i  \text{rank}(A_i)$. Let  $A = [A_1^T \ A_2^T \cdots \ A_K^T]^T$ where each $A_i$ is of size $m \times n$ and has rank $r_i$. Let $Y = \Phi A$. Hence 
\begin{equation*}
Y = \begin{bmatrix} \phi A_K \\ \phi A_{K-1} \\ \vdots \\ \phi A_1 \end{bmatrix} = \begin{bmatrix} Y_k \\ Y_{k-1} \\ \vdots \\ Y_1 \end{bmatrix}.
\end{equation*} 
Since 
\begin{equation}
\|(I - P_{Y^*}){A^*}\| \leq \sum_{i = 1}^k \|(I - P_{Y_i^*})A_i^*\|, \\
\end{equation} and 
$\|(I - P_{Y_i^*})A_i^*\| = 0$ for $l \geq r_i$, we obbtain
\begin{equation}
\|(I - P_{Y^*})A^* \| = 0
\end{equation}
for $l \geq \max_i r_i$.  

Compared to using a dense random matrix, this can be worse by a factor of $K$. Intuitively, this straightforward application of results from \cite{Nhalko} leads to capturing of the subspace spanned by each group of rows $A_i$ individually, without considering the overlap between the subspaces.  This is addressed in our first analytical result (Theorem \ref{thm:main}), which provides a simple but non-trivial estimate of the number of random projections required. We then improve this result in Theorems \ref{thm:final_k2} and \ref{thm:manyK}. 

\begin{theorem}
\label{thm:main}
For a given matrix $A$ of size $Km \times n$, let the $d_i$ be defined as in \eqref{exp}. Let $\Phi$ be a block diagonal matrix with repeated diagonal block $\phi$ of size $l \times m$ and whose entries are chosen i.i.d.\ from the standard normal distribution. Let $Y = \Phi A$. Define $d_0 = \max_i d_i$. For $l \geq d_0$, $\|(I - P_{Y^*})A^* \| = 0$ with probability $1$.
\end{theorem}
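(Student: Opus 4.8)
The plan is to recognize that the condition $\|(I - P_{Y^*})A^*\| = 0$ is exactly the subspace inclusion $\row(A) \subseteq \row(Y)$, and then to establish this inclusion block by block using the triangular factorization in \eqref{exp}. Since $Y = \Phi A$ automatically gives $\row(Y) \subseteq \row(A)$, proving the reverse inclusion yields $\row(Y) = \row(A)$, hence $P_{Y^*} = P_{A^*}$ and the claimed identity. Because $\row(A) = \row(V^T) = \sum_j \row(V_j^T)$, it suffices to show $\row(V_j^T) \subseteq \row(Y)$ for every $j$ with $d_j \neq 0$.

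The first ingredient I would isolate is a genericity lemma: if $C_{ii} \in \R^{m \times d_i}$ has full column rank $d_i$ and $\phi$ is an $l \times m$ standard Gaussian matrix with $l \geq d_i$, then $\phi C_{ii}$ has full column rank $d_i$ with probability one. This holds because $C_{ii}^T$ maps $\R^m$ onto $\R^{d_i}$, so each row of $\phi C_{ii}$ is a nondegenerate Gaussian vector in $\R^{d_i}$, and $l \geq d_i$ independent such vectors span $\R^{d_i}$ almost surely. Intersecting these finitely many probability-one events over $i = 1, \dots, K$ preserves probability one, which is precisely where the hypothesis $l \geq d_0 = \max_i d_i$ enters.

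With this in hand I would run an induction on the block index $i$. Writing $A_i = \sum_{j \le i} C_{ij} V_j^T$ from \eqref{exp}, the base case is $Y_1 = \phi A_1 = \phi C_{11} V_1^T$; since $\phi C_{11}$ has rank $d_1$ and $V_1^T$ has $d_1$ orthonormal rows, $\row(Y_1) = \row(V_1^T) \subseteq \row(Y)$. For the inductive step, assume $\row(V_j^T) \subseteq \row(Y)$ for all $j < i$. Each row of the sub-block $Y_i = \sum_{j \le i} \phi C_{ij} V_j^T$ lies in $\row(Y)$, and subtracting its components along $V_1^T, \dots, V_{i-1}^T$ (which lie in $\row(Y)$ by the inductive hypothesis) leaves a vector in $\row(\phi C_{ii} V_i^T)$ that still lies in $\row(Y)$. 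Since $\phi C_{ii}$ has full column rank $d_i$, right-multiplication by the orthobasis $V_i^T$ preserves row space, so $\row(\phi C_{ii} V_i^T) = \row(V_i^T)$ and hence $\row(V_i^T) \subseteq \row(Y)$, closing the induction.

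The step I expect to be the main obstacle is the bookkeeping in this inductive peeling argument: one must argue carefully that removing the lower-index contributions from a row of $Y_i$ keeps the remainder inside $\row(Y)$ rather than merely inside $\row(A)$, and that the orthonormality of the blocks $V_j$ cleanly separates the $V_i^T$ component from the rest. Everything else reduces to the Gaussian genericity lemma and the elementary fact that $\row(\phi C_{ii} V_i^T) = \row(V_i^T)$ when $\phi C_{ii}$ is full column rank.
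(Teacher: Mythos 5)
Your proposal is correct and follows essentially the same route as the paper: both rest on the factorization \eqref{exp} and the almost-sure full column rank of each $\phi C_{ii}$ when $l \geq d_0$. The only difference is cosmetic — the paper concludes by counting ranks (the rank of a block-triangular matrix is at least the sum of the ranks of its diagonal blocks, forcing $\operatorname{rank}(\Phi A) = r$ and hence $\row(\Phi A) = \row(A)$), whereas you unpack that same step into an explicit induction showing $\row(V_i^T) \subseteq \row(Y)$ block by block.
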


\begin{proof}
We have 
\[\Phi A = \Phi C V^T  =  \begin{bmatrix}
\phi C_{11} & 0 & \cdots & 0\\
\phi C_{21} & \phi C_{22} & \cdots &0\\
\vdots & \vdots & \ddots & \vdots\\
\phi C_{k1} & \phi C_{k2} & \cdots & \phi C_{kk}
\end{bmatrix}\begin{bmatrix}
V_1^T\\
V_2^T\\
\vdots\\
V_k^T
\end{bmatrix}. \] Then, if each diagonal block is full rank, the matrix $\Phi C$ and hence $\Phi CV^T$ is full rank, since $V$ is just an orthonormal matrix. Consider each diagonal block of $\Phi C$, $\phi C_{ii}$. Since $l \geq d_0 $, $\operatorname{rank}(\phi C_{ii}) = d_i$ with probability $1$. Since the rank of any block triangular matrix is at least the sum of ranks of the diagonal blocks, $\operatorname{rank}(\Phi C) = \operatorname{rank}(\Phi C V^T) = \sum_i d_i = r$.  Since $\row(\Phi CV^T) \subset \row(CV^T)$ and  $\operatorname{rank}(\Phi CV^T) = \operatorname{rank}(CV^T)$, $\row(\Phi CV^T) = \row(CV^T)$. Hence, $P_{Y^*} = P_{A^*}$ and the conclusion follows.  
\end{proof}

To understand this theorem, we can first consider the intuition behind random projections: owing to the randomness of the projections, a linearly independent set of vectors in the row/column space of the original matrix is obtained and hence the subspace is captured. With an RBD sketching matrix, one can only obtain random projections within the subspace of each group of rows. However, if the subspaces spanned by these row groups overlap, we may obtain a linearly independent set of vectors that capture the row space of the whole matrix by using a few random projections of each row group. The factorization in \eqref{exp} captures this  dependence between the subspaces spanned by the row groups.

The bound in Theorem \ref{thm:main} can be tight or loose depending on the candidate matrix $A$. The bound is tight when the row spaces span non-overlapping subspaces, since in this case, random projections within a row group do not provide any information about the other row groups. When the subspaces do overlap the bound can be significantly improved, even though it is already better than the trivial bound $l \geq \max_i r_i$. For example, consider a scenario where the subspaces spanned by the $A_i$'s are nested in the subspaces spanned by the $A_j$'s for all $i < j$. In such a case, by obtaining random projections within the span of $A_i$, we are also already obtaining random projections in the subspace spanned by $A_j$ for all $i < j$. This allows for $l$ to be much smaller, even if some of the $d_i$ are large, contrary to what Theorem \ref{thm:main} predicts.  As we note in the next subsection, this is precisely the case for the linear operator associated with range-limited images.
 
 \par Although our result concerns exactly low-rank matrices, it can be generalized to matrices that are approximately low rank using perturbation theory for projection matrices \cite{Li2013}. In general, for matrices with numerical ranks much smaller than the ambient dimensions, the subspaces spanned by the $A_i$'s will overlap. This overlap reduces the number of random vectors needed per block.

\begin{figure*}[!tb]
\centering
\begin{minipage}{0.48\textwidth}
\includegraphics[width=\linewidth]{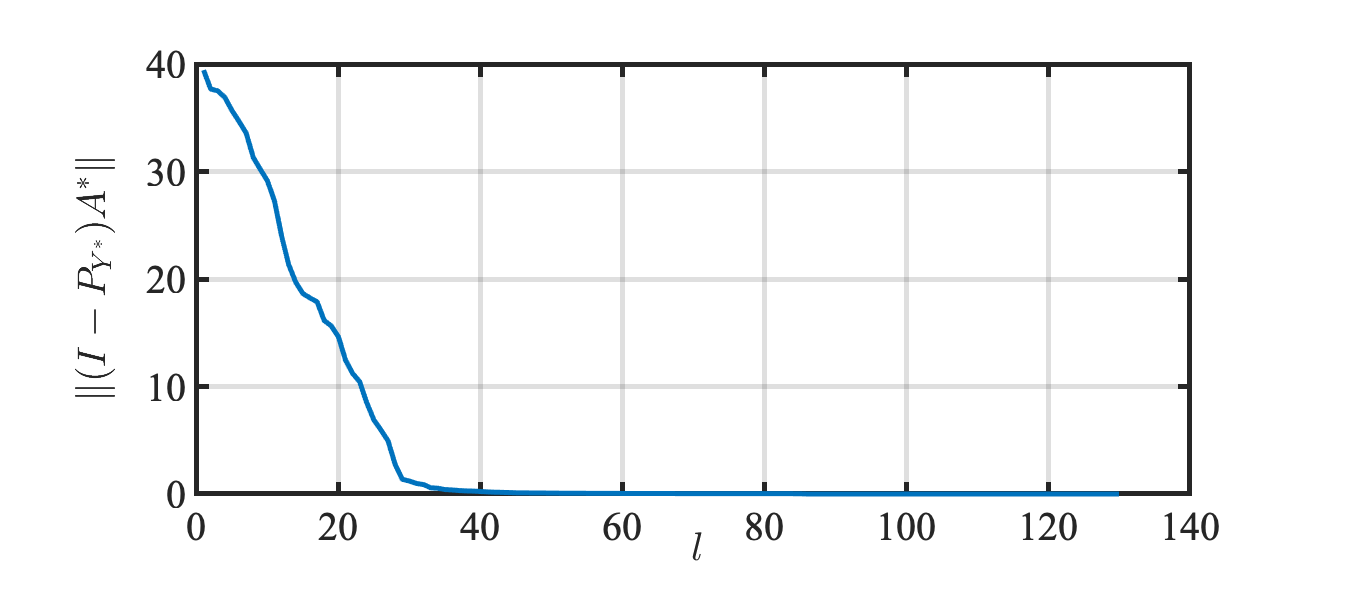}
\end{minipage}
\quad
\begin{minipage}{0.48\textwidth}
\includegraphics[width=\linewidth]{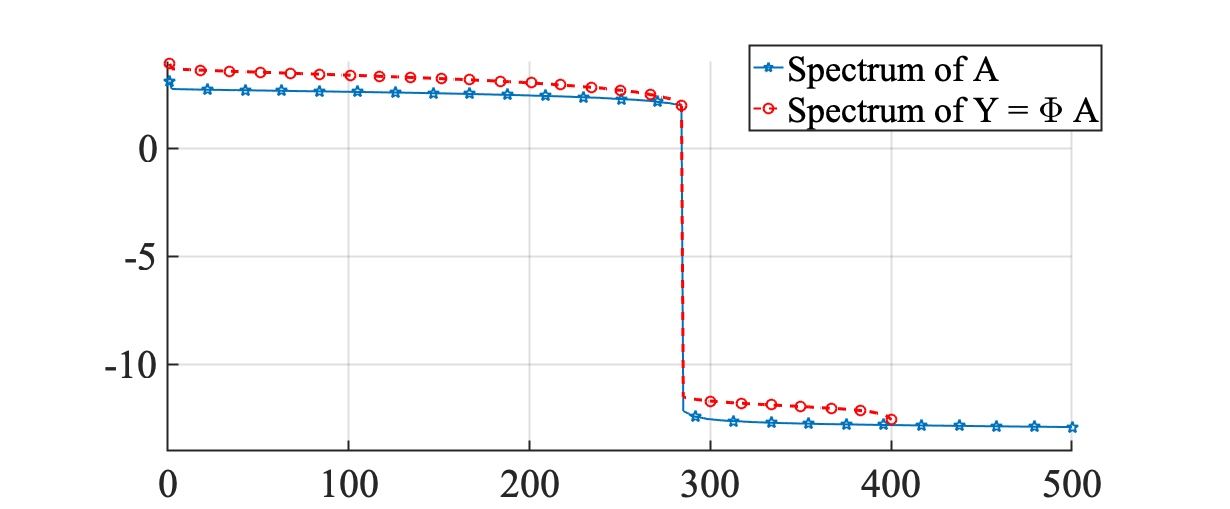}
\label{spectra}
\end{minipage}
\caption{\small \sl (a) shows the error $\|(I - P_{Y^*})A^* \|$ and (b) shows logarithm of the spectrum of $A$ and $Y$ for $l = 40$. The number of significant singular values of $A$ and $Y$ can be seen to be approximately the same and hence they have approximately the same rank and row space.}\label{error}
\end{figure*}

Figure \ref{error} shows the error $\|(I - P_{Y^*})A^* \|$ and the spectra of the matrices $A$ and $\Phi A$ for a randomly generated test matrix of size $2000 \times 1000$, with $\max_i(r_i) \approx 300$ and $d_0  = 110$.  We observe empirically that the row space was captured with $l \approx 40$. 

\subsection{Least squares with a block diagonal sketching matrix: Overlapping subspaces}

Consider a 1D array and a 2D image. For an image with delta thickness with reflectors at a known fixed depth $R_0$, the rows of $A_{\lambda_i}$ are unit-length sinusoids in the frequency range $[-D/\lambda_i, \ D/\lambda_i ]$, with a modulation term $e^{-j 2\pi R_0/\lambda_i}$. The row space of $A_i$ is therefore well approximated by the first $2D/\lambda_i$ discrete prolate spheroidal sequences (DPSS) \cite{BLTJ:BLTJ3976}. As the wavelength progresses from the highest $\lambda_{\max}$ to the lowest $\lambda_{\min}$, the subspace spanned by the rows of $A_{\lambda_i}$ is nested in the subspace spanned by $A_{\lambda_j}$ for all $j > i$. This is shown in Figure \ref{fig:nested} where the full operator with all wavelengths has a row space of the same dimension as the operator at only the highest frequency. 

\begin{figure}[!tb]
\centering
\includegraphics[scale=0.24]{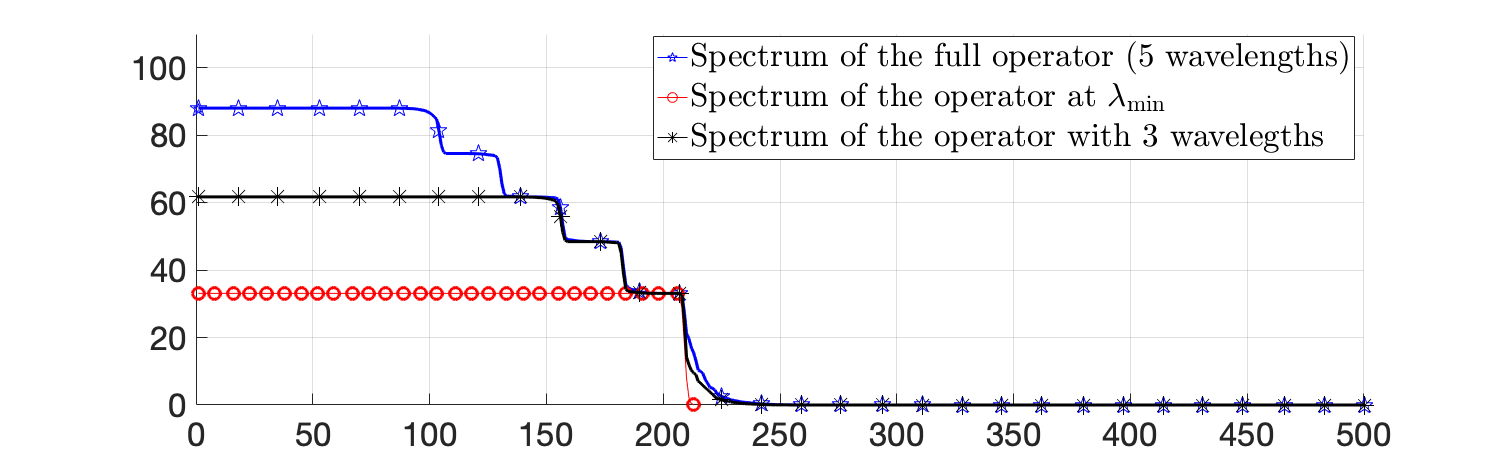}
\caption{\small \sl Normalized spectra of the full operator with multiple excitation wavelengths and of the operator at just the highest frequency. It is clear that the row spaces of the operators at the higher wavelengths are nested in that of the operator at the lowest wavelength. The same relationship is observed between the operator at any other excitation wavelength and higher wavelengths.}
\label{fig:nested}
\end{figure}
Although Theorem \ref{thm:main} is appealing because of its simplicity, it can lead to suboptimal estimates of $l$. For example, if $\lambda_{\min} = 7.5$cm and $\lambda_{\max} = 15$cm, one could only hope for a reduction in the number of measurements by a factor of $2$. But as we have already seen in Figure \ref{fig:CRmain}, we can easily achieve better subsampling.

The following two theorems provide much stronger results: they directly address the question of when $l \geq r/K$ random projections suffice to achieve $\|(I - P_{Y^T})A^T \| = 0$. In the context of array imaging, they address the question of when using $K$ different excitation wavelengths can offer the luxury of imaging with roughly only $M/K$ measurements. Their proofs are deferred to the appendix. 

\begin{theorem}
\label{thm:final_k2}
Let $A =  \bigl[ \begin{smallmatrix}A_1 \\  A_2 \end{smallmatrix} \bigr ]$, with $\row(A_1) \subseteq \row(A_2)$. Assume $A_2$ is full row rank and define $U = A_1A_2^T$. If the entries of $\phi$ are drawn from a continuous distribution, with probability 1, $M = \bigl [ \begin{smallmatrix}\phi A_1 \\ \phi A_2 \end{smallmatrix} \bigr ]$ is full row rank for any $l \leq n/2$ if and only if no real eigenvalue of $U$ has an algebraic multiplicity greater than  $n/2$. Consequently, for $l \geq n/2$, $\|(I - P_{M^T})A^T = 0\|$ with probability 1.
\end{theorem}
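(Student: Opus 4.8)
The plan is to prove the full-rank characterization first and then read off the projection statement as a corollary. My first move would be to convert the ``with probability $1$'' claim into a purely algebraic genericity statement: the entries of $M$ are affine functions of the entries of $\phi$, so every $2l\times 2l$ minor of $M$ is a polynomial in those entries, and the locus where $M$ fails to have full row rank is the common zero set of these minors. Such a set is either the whole sample space or a proper subvariety, which is Lebesgue-null for a continuous distribution. Hence it suffices to decide whether there \emph{exists} a single $\phi$ making $M$ full row rank, and since $W:=\Range(\phi^{T})\subseteq\R^{m}$ can be made an arbitrary $l$-dimensional subspace, the question reduces to whether a \emph{generic} $l$-dimensional $W$ is admissible.

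Next I would shrink the problem using the hypotheses. Because $\row(A_1)\subseteq\row(A_2)$ and $A_2$ has full row rank, there is a unique square matrix $L$ with $A_1=LA_2$, and it satisfies $U=A_1A_2^{T}=L\,(A_2A_2^{T})$, so that $U=L$ exactly when the rows of $A_2$ are orthonormal. Writing $M=\bigl[\begin{smallmatrix}\phi L\\ \phi\end{smallmatrix}\bigr]A_2$ and using that right multiplication by the full-row-rank matrix $A_2$ preserves rank, I obtain $\operatorname{rank}(M)=\operatorname{rank}\bigl(\bigl[\begin{smallmatrix}\phi L\\ \phi\end{smallmatrix}\bigr]\bigr)$. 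A short computation shows that this stacked matrix drops rank exactly when there is a nonzero $u\in W$ with $L^{T}u\in W$, i.e.\ when the map $u\mapsto L^{T}u \bmod W$ from $W$ into $\R^{m}/W$ fails to be injective. The obstruction I expect to surface is a genuine real eigenvector of $L^{T}$ (equivalently of $U$, under the orthonormal normalization) lying inside $W$, since for $L^{T}u=\mu u$ one automatically has $L^{T}u\in W$ whenever $u\in W$.

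The heart of the argument is then a dimension count on the Grassmannian $\mathrm{Gr}(l,m)$. I would form the incidence variety $\{(u,W):0\neq u\in W,\ L^{T}u\in W\}$ and stratify it by whether $u$ is an eigenvector. For a non-eigenvector, $W$ must contain the $2$-plane $\Span\{u,L^{T}u\}$, and the resulting locus is a proper subvariety precisely when $2l\le m$; this is also where the generalized (non-genuine) eigenvectors land, so they create no new full-dimensional bad component. For a real eigenvalue $\mu$ with eigenspace of dimension $e$, $W$ need only contain the line through the eigenvector, and this locus fills all of $\mathrm{Gr}(l,m)$ exactly when $e>m-l$, i.e.\ when every $l$-plane is forced to meet the eigenspace. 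Taking the binding value of $l$ converts $e>m-l$ into the threshold in the statement, so a generic $W$ is admissible if and only if no real eigenvalue of $U$ has multiplicity exceeding the bound. The step I expect to be the main obstacle is precisely this multiplicity bookkeeping: confirming that only genuine eigenvectors contribute, pinning down exactly which notion of multiplicity of $U$ governs the threshold, and reconciling the reduced ambient dimension with the $n/2$ appearing in the statement.

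Finally, the projection statement is a short corollary. Since $\row(M)\subseteq\row(A)=\row(A_2)$ always holds, it suffices to match ranks. Under the spectral condition on $U$, the same reduction and genericity argument show that already at $l=n/2$ the localized projections $\bigl[\begin{smallmatrix}\phi L\\ \phi\end{smallmatrix}\bigr]$ attain their maximal possible rank, namely $\operatorname{rank}(A)$; adding rows can only increase the rank, so for every $l\ge n/2$ we have $\operatorname{rank}(M)=\operatorname{rank}(A)$ and hence $\row(M)=\row(A)$. Therefore $P_{M^{T}}=P_{A^{T}}$ and $\|(I-P_{M^{T}})A^{T}\|=0$ with probability $1$.
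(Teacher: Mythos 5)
Your proposal is correct, and it takes a genuinely different route from the paper's proof. The paper starts from the same genericity principle (phrased there as polynomial identity testing: a polynomial that is not identically zero is a.s.\ nonzero at a random point), but its sufficiency argument is computational: it first normalizes $A_1,A_2$ to have orthonormal rows spanning the same space, so that $U$ becomes an \emph{orthogonal} matrix, writes the canonical decomposition $U=Q^TRQ$ with $R$ block diagonal in $2\times 2$ rotations and $\pm 1$ blocks, and then exhibits, case by case (complex-conjugate pairs, pairs of distinct real eigenvalues, repeated real eigenvalues), an explicit monomial in the determinant of a $2l\times 2l$ submatrix of $[P\;\;RP]$ whose coefficient is a product of factors $\sin\theta_i$ and $(\lambda_{i,2}-\lambda_{i,1})$, hence nonzero; the necessity direction is a null-space dimension count, essentially the Schubert-type observation you make. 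Your route replaces all of the coefficient bookkeeping by the clean equivalence that $M$ drops rank iff some nonzero $u\in W=\Range(\phi^{T})$ has $L^{T}u\in W$ (with $A_1=LA_2$), plus a stratified dimension count on $\mathrm{Gr}(l,m)$: the non-eigenvector stratum forces $W\supseteq\Span\{u,L^{T}u\}$ and has codimension $m-2l+1\geq 1$ whenever $2l\leq m$ (generalized eigenvectors fall in this stratum, as you say), while the eigenvector stratum fills the Grassmannian exactly when some real eigenspace has dimension exceeding $m-l$. Your approach buys coordinate-freeness, no case analysis, and a uniform treatment of non-normal $U$; the paper's buys a fully explicit certificate (a nonvanishing minor coefficient) within the same polynomial-identity framework it reuses for Theorem \ref{thm:manyK}.

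The obstacle you flag is real, and your own count already resolves it: the governing quantity is the \emph{geometric} multiplicity (eigenspace dimension) of $L=A_1A_2^{\dagger}$, not the algebraic multiplicity of $U=A_1A_2^{T}$. The two notions coincide exactly in the setting the paper's written proof covers---orthonormal rows with equal row spaces, where $U=L$ is orthogonal and hence diagonalizable---and the paper's closing claim that the general case follows ``by operating on the real Schur decomposition of $U$'' steps over precisely this point. Concretely: if $n=m=2$, $A_2=I$ and $L^{T}$ is a single Jordan block with eigenvalue $\mu$, then the algebraic multiplicity is $2>n/2$, yet for $l=1$ one computes $\det\bigl[\begin{smallmatrix}\phi L\\ \phi\end{smallmatrix}\bigr]=y^2\neq 0$ a.s.\ (writing $\phi=(x,y)$), so the stated ``only if'' fails; conversely, if $A_1=A_2$ with $A_2A_2^{T}$ having distinct eigenvalues, then $U$ satisfies the stated spectral condition while $M$ has repeated row blocks and is never full row rank (your criterion on $L=I$ correctly predicts this). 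So carry out your plan as written, with geometric multiplicity of $L$ as the threshold quantity and with $m=\operatorname{rank}(A_2)$ playing the role of the paper's $n$: you recover the theorem in the paper's normalized setting and a corrected statement in general; your final monotonicity argument for $l\geq n/2$ is fine.
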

 
Theorem \ref{thm:final_k2} provides the necessary and sufficient conditions on an ensemble of two matrices with a nested subspace structure under which only $r/2$ projections of each block are sufficient to capture the row space. The condition prohibits the existence of an invariant subspace of dimension greater the $n/2$ that is common to both $A_1$ and $A_2$. Otherwise, $\phi A_1$ and $\phi A_2$ will both have a component along this subspace for some $l < n/2$, resulting in a loss of linear independence.

The eigenvalue distribution for the array imaging matrices is shown in Figure \subref*{fig:eigU} when $\lambda_{\min} = 7.5$cm and $\lambda_{\max} = 10$cm. It is clear that the matrices meet the required condition and hence only $M/2$ measurements suffice, in contrast to what is predicted by Theorem \ref{thm:main}, which would be at least $3M/4$.


In our next result, we extend the result to the case with more than two blocks and provide a sufficient condition on the ensemble of $K$ matrices $\{A_i \}$ under which only $l \geq r/K $ random projections per block can capture the full row space. To do this, we define the following matrices:
\[M = \begin{bmatrix}
\phi A_1 \\
\phi A_2 \\
\vdots \\
\phi A_K \\
\end{bmatrix}  \ \text{and} \ \widehat{M} =  \begin{bmatrix}
V_{\calS} A_1 \\
V_{\calS} A_2 \\
\vdots \\
V_{\calS} A_K \\
\end{bmatrix} \label{eq:Mhat} \] where $V_{\calS}$ is any orthonormal matrix of size $l \times m$.

\begin{theorem}
\label{thm:manyK}
Given an ensemble of $K$ matrices $\{A_i\} $ for $i = 1,\cdots,K$, each of size $m \times n$ and a matrix $\phi \in \R^{l \times m}$ with entries drawn from the standard normal distribution, $M$ as defined above is full row rank if there exists an orthonormal basis $V_{\calS} \in \R^{l \times m}$ such that the $kl \times n$ size matrix $ \widehat{M}$ has full row rank. Consequently, for $l = r/K$, $\|(I - P_{M^T})A^T \| = 0$ with probability 1.
\end{theorem}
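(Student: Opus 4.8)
The plan is to reduce the first assertion to the elementary fact that a nonzero polynomial in finitely many real variables vanishes only on a set of Lebesgue measure zero, and then to use the hypothesis to exhibit a single point at which the relevant polynomial does not vanish. Write $M(\phi)$ to emphasize that $M = [(\phi A_1)^T \ \cdots \ (\phi A_K)^T]^T$ has entries that are linear in the $lm$ entries of $\phi$. Since $l = r/K$ gives $Kl = r \le n$, full row rank of $M(\phi)$ is equivalent to $\det\!\big(M(\phi)M(\phi)^T\big)\neq 0$, and the map $p\colon \phi \mapsto \det\!\big(M(\phi)M(\phi)^T\big)$ is a polynomial in the entries of $\phi$.

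The crux is to show that $p$ is not the zero polynomial, and this is exactly where the hypothesis is used. By assumption there is an orthonormal $V_{\calS}\in\R^{l\times m}$ for which $\widehat{M}$ has full row rank. But $\widehat{M}$ is precisely $M(\phi)$ evaluated at the admissible point $\phi = V_{\calS}$, which has the same shape as $\phi$; hence $p(V_{\calS}) = \det(\widehat{M}\,\widehat{M}^T)\neq 0$ and so $p\not\equiv 0$. I would remark, as motivation for the orthonormal phrasing, that $\operatorname{rank} M(\phi)$ depends only on the row space of $\phi$: replacing $\phi$ by $G\phi$ for invertible $G\in\R^{l\times l}$ left-multiplies $M$ by the invertible block matrix $I_K\otimes G$ and therefore preserves its rank, so it suffices to range over $l$-dimensional subspaces of $\R^m$, each represented by an orthonormal frame.

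With $p\not\equiv 0$ established, I would invoke the measure-zero lemma: the zero set of a nonzero polynomial in $lm$ real variables has Lebesgue measure zero. Because the entries of $\phi$ are drawn from a jointly absolutely continuous (standard normal) law, the event $\{p(\phi)=0\}$ has probability zero, so $M(\phi)$ has full row rank with probability $1$. This completes the first assertion; the only bookkeeping is confirming that $p$ really is a polynomial and that $V_{\calS}$ is a legitimate evaluation point.

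For the consequence, set $l = r/K$, so $M$ has exactly $Kl = r$ rows. On the probability-one event above, $\operatorname{rank} M = r$. Since each row of $\phi A_i$ is a linear combination of rows of $A_i$, we have $\row(M)\subseteq\row(A)$, and $\dim\row(A)=r$ forces $\row(M)=\row(A)$; the projectors onto these coincident row spaces agree, $P_{M^T}=P_{A^T}$, and as the columns of $A^T$ lie in $\row(A)$ we get $(I-P_{M^T})A^T = (I-P_{A^T})A^T = 0$, hence $\|(I-P_{M^T})A^T\|=0$ with probability $1$. I expect the main obstacle to be the non-vanishing step, namely recognizing that the existential hypothesis on $V_{\calS}$ is exactly a certificate that the determinant polynomial is not identically zero; everything else is the standard Gaussian-genericity argument already used for the dense case in the cited work, together with the row-space invariance remark that both justifies the orthonormal hypothesis and places the witness and the random draw in the same parameter space.
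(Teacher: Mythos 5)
Your proof is correct, and it gets to the conclusion by a genuinely simpler mechanism than the paper's. Both arguments are polynomial identity testing at heart: the rank-deficiency locus of $\phi \mapsto M(\phi)$ is the zero set of the polynomial $p(\phi)=\det\left(M(\phi)M(\phi)^T\right)$, a nonzero polynomial vanishes on a Lebesgue-null set, and the Gaussian law is absolutely continuous, so the random draw avoids that set almost surely. The difference is in how non-vanishing of $p$ is certified. You notice that $\widehat{M}$ is \emph{literally} $M(\phi)$ evaluated at $\phi = V_{\calS}$, so the hypothesis hands you a point-evaluation certificate, $p(V_{\calS})=\det(\widehat{M}\widehat{M}^T)\neq 0$, and you are done. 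The paper instead works with transposed matrices, invokes rotational invariance of the Gaussian to absorb an orthobasis $V$ (so that $V_{\calS}$ becomes a coordinate sub-block), rearranges columns, and expands a $Kl \times Kl$ determinant multilinearly to show that the coefficient of the monomial $p_{11}^{K}p_{22}^{K}\cdots p_{ll}^{K}$ equals $\det\left(\left[A_1V_{\calS}\ A_2V_{\calS}\ \cdots\ A_KV_{\calS}\right]\right)$, nonzero by hypothesis --- a coefficient-extraction certificate rather than an evaluation certificate. Your route is shorter, skips the rotation and rearrangement bookkeeping, and makes transparent that orthonormality of $V_{\calS}$ is immaterial (any witness $\phi_0$ with $M(\phi_0)$ of full row rank would do, by your $G\phi$ row-space remark), which is a mild strengthening of the stated theorem. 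What the paper's heavier machinery buys is uniformity: coefficient extraction is genuinely unavoidable in the proof of Theorem \ref{thm:final_k2}, where no evaluation point is available in advance, so the paper's two appendix proofs use one technique. Your treatment of the consequence (row-space containment plus the dimension count $\operatorname{rank}(M)=r$ forces $P_{M^T}=P_{A^T}$, hence $(I-P_{M^T})A^T=0$) matches the paper's intent exactly.
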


Intuitively, Theorem \ref{thm:manyK} requires that there is at least one subspace of dimension $l = r/K$, which when projected onto the matrices $A_i$ results in a set of $K$ linearly independent subspaces. In Figure \subref*{fig:hist}, we show the histogram of the smallest singular values of the matrix $\widehat{M}$ for the array imaging operator with $K= 8$ excitation wavelengths placed uniformly between $\lambda_{\min} = 7.5$cm and $\lambda_{\max} = 15$cm,  for $1000$ realizations of randomly chosen orthonormal basis $V_{\calS}$. In this case, the number of array elements was $M = 213$ and the scene considered had delta thickness. Hence, approximately only $30$ spatial measurements suffice in imaging any such scene. As the range extent of the images increases, the nested structure in the row spaces ceases to exist. However, the subspaces are still have a high degree of overlap and a number of aperture codes much smaller than the number of conventional beams used suffice for imaging. 

\begin{figure*}[tbhp]
\begin{minipage}{0.48\textwidth}
\subfloat[]
{\label{fig:eigU}\includegraphics[scale=0.2]{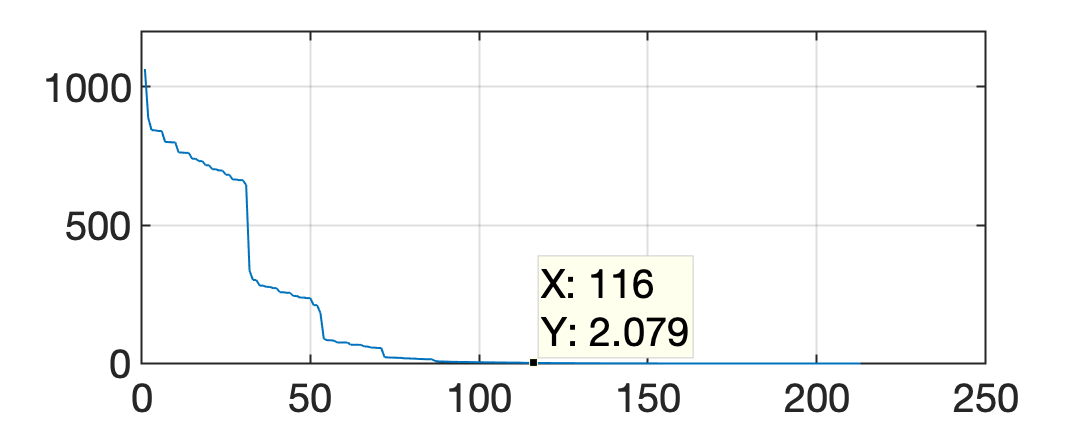}}
\end{minipage}
\quad
\begin{minipage}{0.48\textwidth}
\subfloat[]
{\label{fig:hist}\includegraphics[scale=0.2]{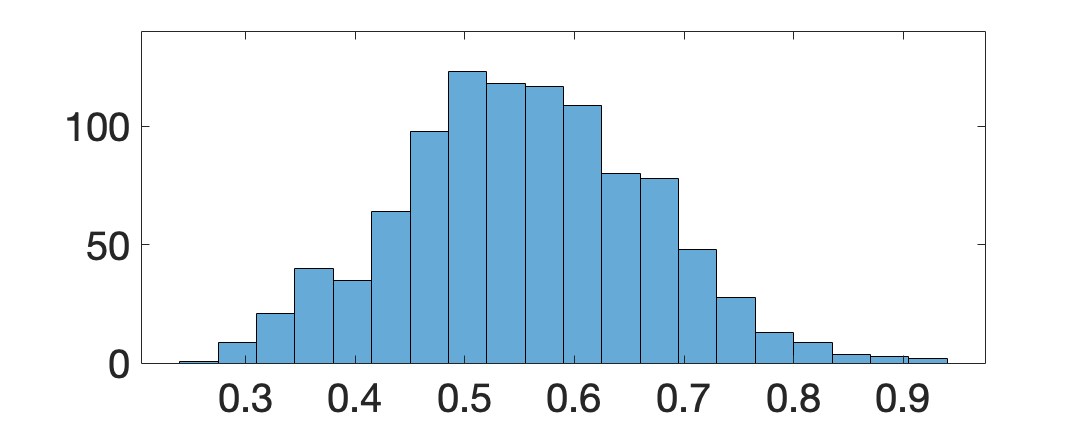}}
\end{minipage}
\caption{\small \sl (a) The eigenvalue distribution of $A_1A_2^T$ for two exctiation wavelengths $7.5$cm and $10$cm. The eigenvalue distribution ensures that $l = M/2$ random projections suffice (b) Histogram of the smallest singular value of $\widehat{M}$ over 1000 realizations of randomly generated orthobases $V_\mathcal{S}$, for the array imaging operator with $8$ excitation wavelengths were used between $7.5$cm and $10$cm. There exist many orthobases such that the sufficient condition of Theorem \ref{thm:manyK} holds. In this case, hence, $l = M/8$ random aperture codes are enough for imaging.}\label{fig:thm_conditions}
\end{figure*}

The conditions stated in Theorems \ref{thm:final_k2} and \ref{thm:manyK} ensure that the lack of diversity among the diagonal blocks of the RBD matrix is compensated for by the data matrix $A$ itself. As we show in further sections, the array imaging operator satisfies these conditions, thus lending itself to spatial subsampling.

\section{Extensions to aperture coded imaging}

The proposed aperture coded imaging system requires obtaining linear combinations of the array outputs at different excitation wavelengths. The analysis presented in the previous section assumed that the weights for these linear combinations are drawn from a continuous distribution such as the standard normal distribution and also assumed noise-free conditions. In this section, we wish to point out an acquisition method that follows a similar spirit but can be practically more appealing and we also model noisy imaging settings. Although we do not provide analysis for these cases, our simulations, shown in the next section, demonstrate that the fundamental idea of trading spatial measurements with excitation bandwidth for range limited images can still offer significant gains. 

\subsection{Imaging with subsampled array}
\label{subsec:subsampling}

The observation that the row spaces of the operator at different wavelengths overlap to a high degree for range-limited images suggests that we can also directly subsample the array and obtain direct measurments without any loss in resolution. Subsampled arrays can be thought of as aperture coded arrays with binary codes. Again drawing a parallel to the literature in randomized numerical linear algebra, subsampling the array is similar to  sampling a few rows and columns of a matrix to provide an approximation. (See \cite{pmlr-v32-ma14} and the references therein.) A standard approach in the linear algebraic community is to use randomized subsampling. A similar scheme can be applied to antenna arrays: sample only a fraction of randomly chosen array elements with broadband excitation. This results in a set of array measurements at different wavelengths that share the same subsampling pattern. 

There could also be more principled subsampling approaches for the specific application of broadband array imaging. Considering Figure \ref{fig:ppgridCR}, one might expect that sampling a few terminal elements and some elements at the center may be sufficient to obtain enough samples to reconstruct the common function. Providing theoretical justification for how many array elements are required could be an interesting direction for future research. In Figure \ref{fig:Mask}, we show the binary mask resulting from such an approach described for a 2D array. In the case of a scene with delta thickness present at a known constant range, measurements at each wavelength only samples the \textit{innovation} that the wavelength adds compared to other larger wavelengths. 

\begin{figure}[!tb]
\centering
\includegraphics[scale=0.3]{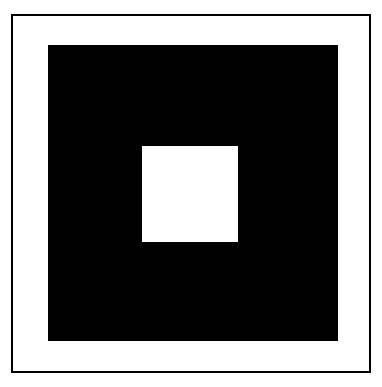}
\caption{\small \sl A binary mask that samples elements at the edge of the array and the center of the array. The white region represents the elements sampled. }
\label{fig:Mask}
\end{figure}

\subsection{Imaging in a noisy scenario}
\label{subsec:noisy}
The theorems from previous section address the case of noiseless imaging. However, we show in our experiments that aperture coding is also robust to noise. Stable least squares reconstruction is commonly achieved using Tikhonov regularization. Noisy measurements are modeled as  
\begin{equation}
y = Ax_0 + e
\label{noisy}
\end{equation} where $e$ represents the noise vector. $x$ can be estimated by solving 
\begin{equation}
x_{\text{Tikh}} = \mathop{\text{arg} \min}_x \|y - Ax\|_2^2 + \delta \| x\|^2
\label{tikh}
\end{equation} 
where $\delta$ is a regularization parameter. Robust reconstruction can be obtained using the right choice of the regularization parameter. For aperture coding, we assume the following model for noisy measurements, just as in traditional beamforming methods:
\begin{equation}
\Phi y = \Phi (Ax + e).
\label{noisy_compressed}
\end{equation}Again, we estimate $x$ by solving 
\begin{equation}
x_{C, \ \operatorname{Tikh}} = \text{arg} \min \|\Phi y - \Phi Ax\|_2^2 + \hat{\delta} \| x\|^2.
\end{equation}
The choice of $\delta$ and $\hat{\delta}$ affect the reconstruction performance. While increasing the regularization parameter reduces the effect of noise on the reconstruction, it also leads to an increase in the signal reconstruction performance. The choice of the parameter hence leads to a trade-off. In our simulations, we show that for a given choice of $\delta$, we can choose $\hat{\delta}$ to match the signal reconstruction performance and the noise performance of traditional imaging. 

We use the following definitions to quantify the performance of aperture coding in a noisy scenario:
\begin{equation}
 \text{Signal reconstruction error}~=~\frac{\|x_{\text{LS}} - (\Phi A)^{\dagger}_{\operatorname{Tikh}} \Phi Ax \|^2}{\| x_{\text{LS}}\|^2} 
\end{equation}
\begin{equation}
\text{Output SNR}~=~20 \log_{10} \frac{\|(\Phi A)^{\dagger}_{\operatorname{Tikh}} \Phi Ax  \|}{\|(\Phi A)^{\dagger}_{\operatorname{Tikh}} \Phi e\|}
\label{eq:opSNR}
\end{equation}
where $(S)^{\dagger}_{\operatorname{Tikh}} = (S^TS + \delta I)^{-1}S^T$ denotes the Tikhonov regularized pseudo-inverse of the matrix $S$.
\section{Experiments}
\label{sec:exp}

We now provide  experimental results to show the effectiveness of aperture coding. Various experiments were conducted: aerture coding simulations were conducted with images at a constant depth from the antenna array, for flat images parallel to the 2D array and delta thickness images that are multi-depth with known and unknown range profiles. The next set of experiments deal with subsampling the array for the same class of images. These are followed by simulations in the presence of noise.  We then provide simulation results for objects whose range limits are not thin and span a range of $15$cm. The following array parameters were used in all the simulations: an array of $40\times 40$ elements were used with 15 excitation wavelengths placed regularly in the bandwidth of $7.5$cm and $15$cm. The elements were placed at half the smallest wavelength. The scene was assumed to be within the angular span of $[-\pi/4,\  \pi/4]$ in both directions. For this configuration the number of beams/ measurements needed at each frequency for conventional imaging is about 1100. Quantitative error values for all the experiments are given in Table \ref{table:errors}. 

\textbf{A. Constant range and multi-depth images}

Figure \ref{fig:CRmain} shows reconstruction results for images at a constant, known range. Conventional beamforming requires about 1100 beams in this case to scan over the entire image. It is clear that similar reconstruction performance can be obtained using as few as 80 beams with wideband excitation. The relative reconstruction error is almost negligible in each case (shown in Table \ref{table:errors}).

\begin{figure}[tbhp]
\centering
\includegraphics[scale=0.25]{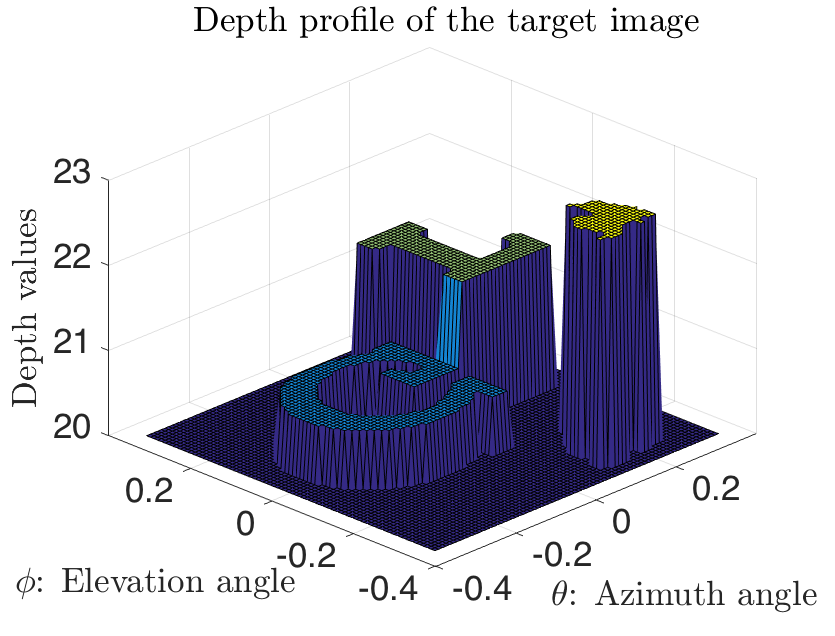}
\caption{\small \sl Depth map of the multi-depth image used in simulations.}
\label{DM}
\end{figure}

\begin{figure}[tbhp]
\centering
\begin{minipage}{0.32\linewidth}
\subfloat[\small \sl full imaging]{\label{MD:a}\includegraphics[scale=.22]{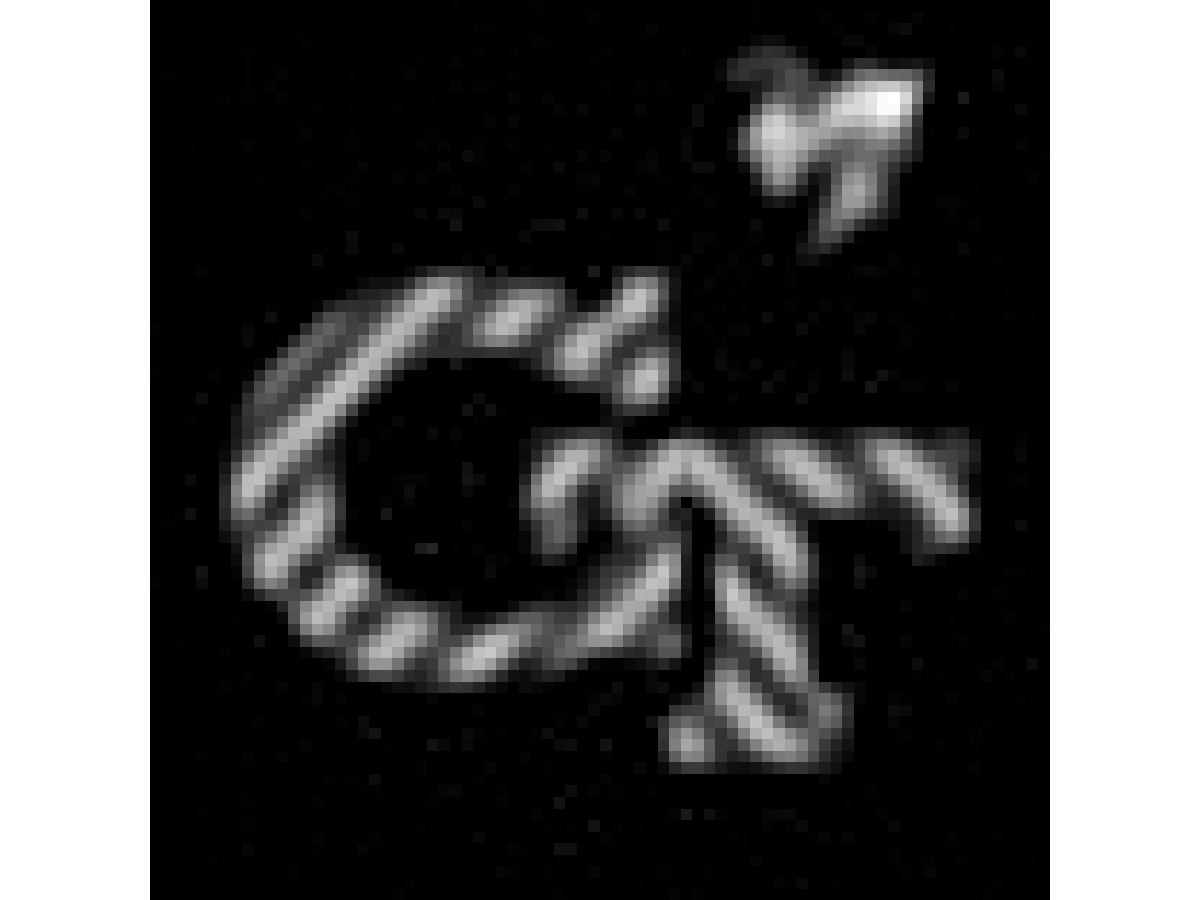}}
\end{minipage}
\hspace{1.5em}
\begin{minipage}{0.32\linewidth}
\subfloat[\small \sl 320 generic beams]{\label{MD:c}\includegraphics[scale=.22]{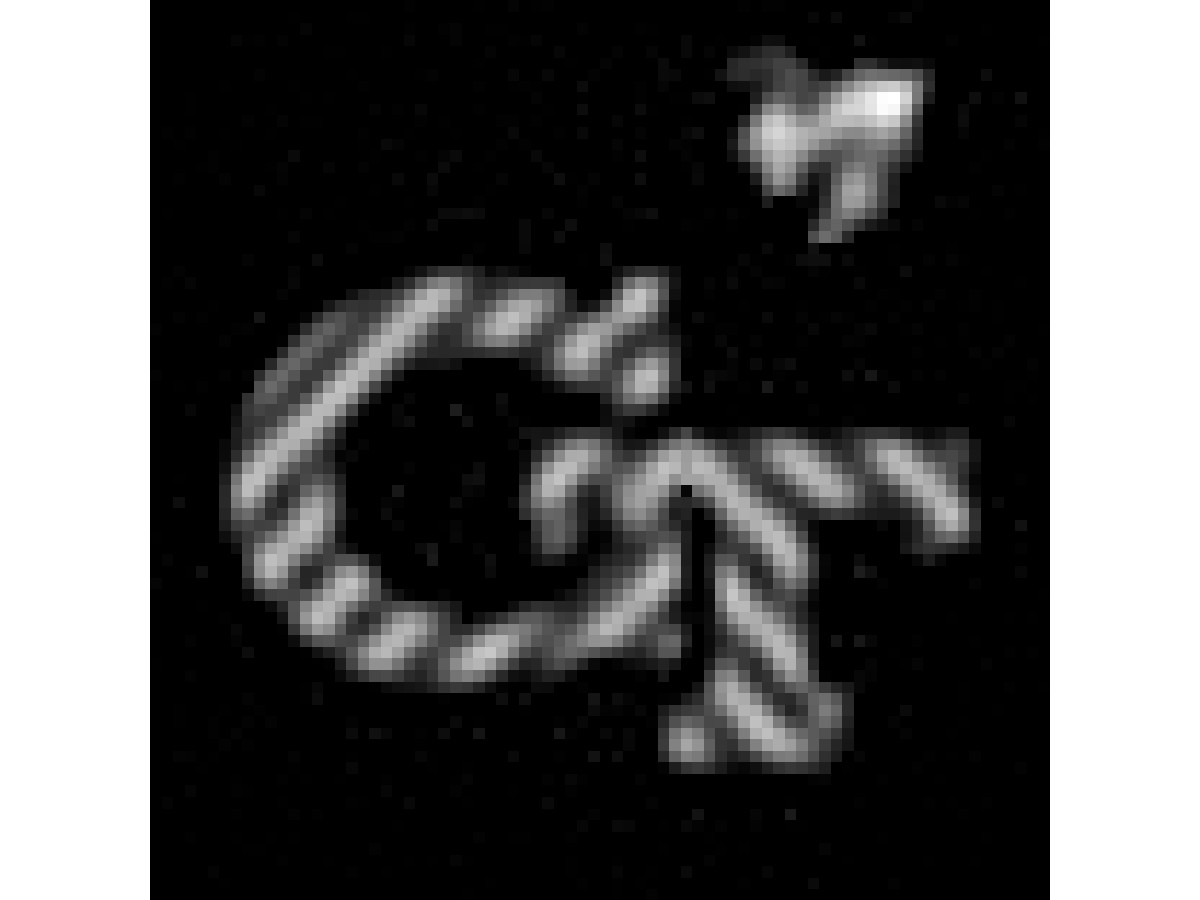}}
\end{minipage}
\par\medskip
\begin{minipage}{0.32\linewidth}
\subfloat[\small \sl 160 generic beams]{\label{MD:d}\includegraphics[scale=.22]{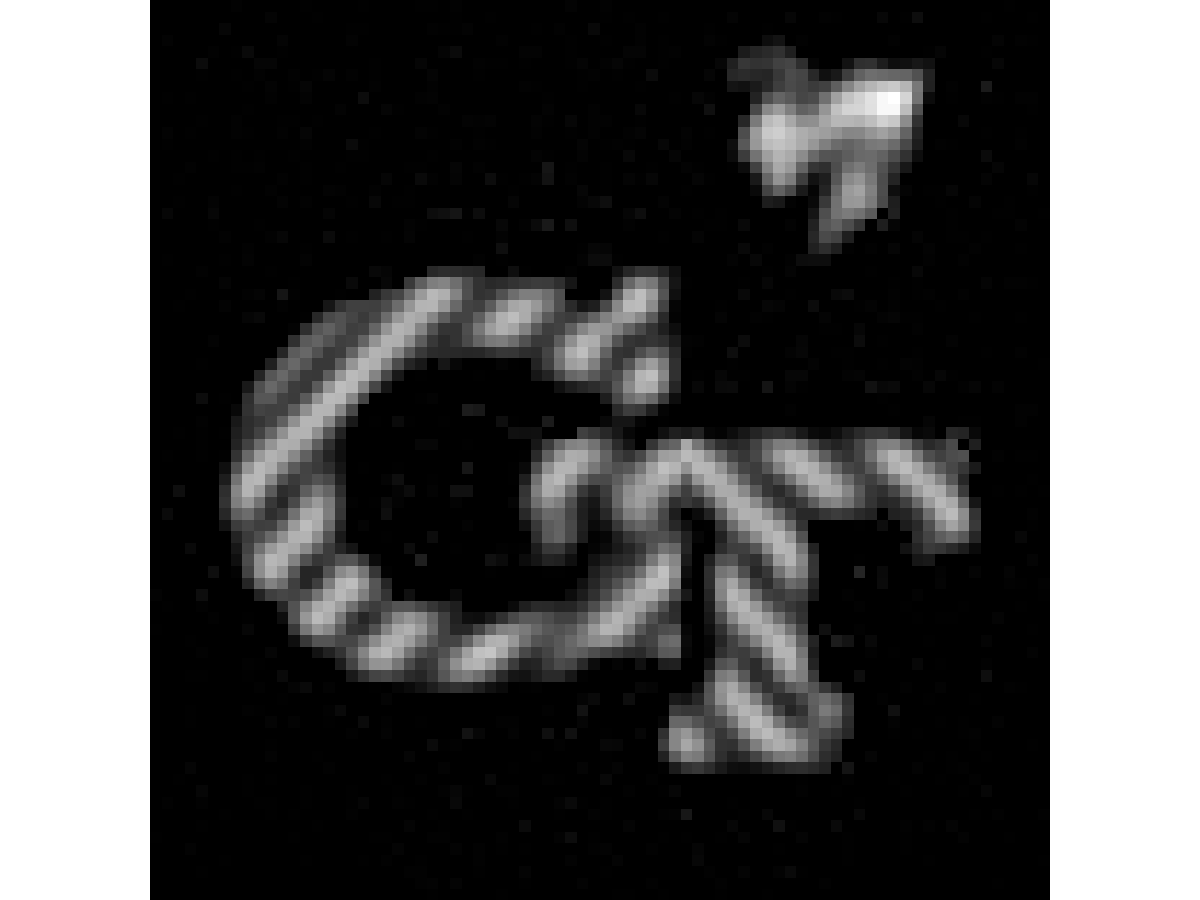}}
\end{minipage}
\hspace{1.5em}
\begin{minipage}{0.32\linewidth}
\subfloat[\small \sl 80 generic beams]{\label{MD:e}\includegraphics[scale=.22]{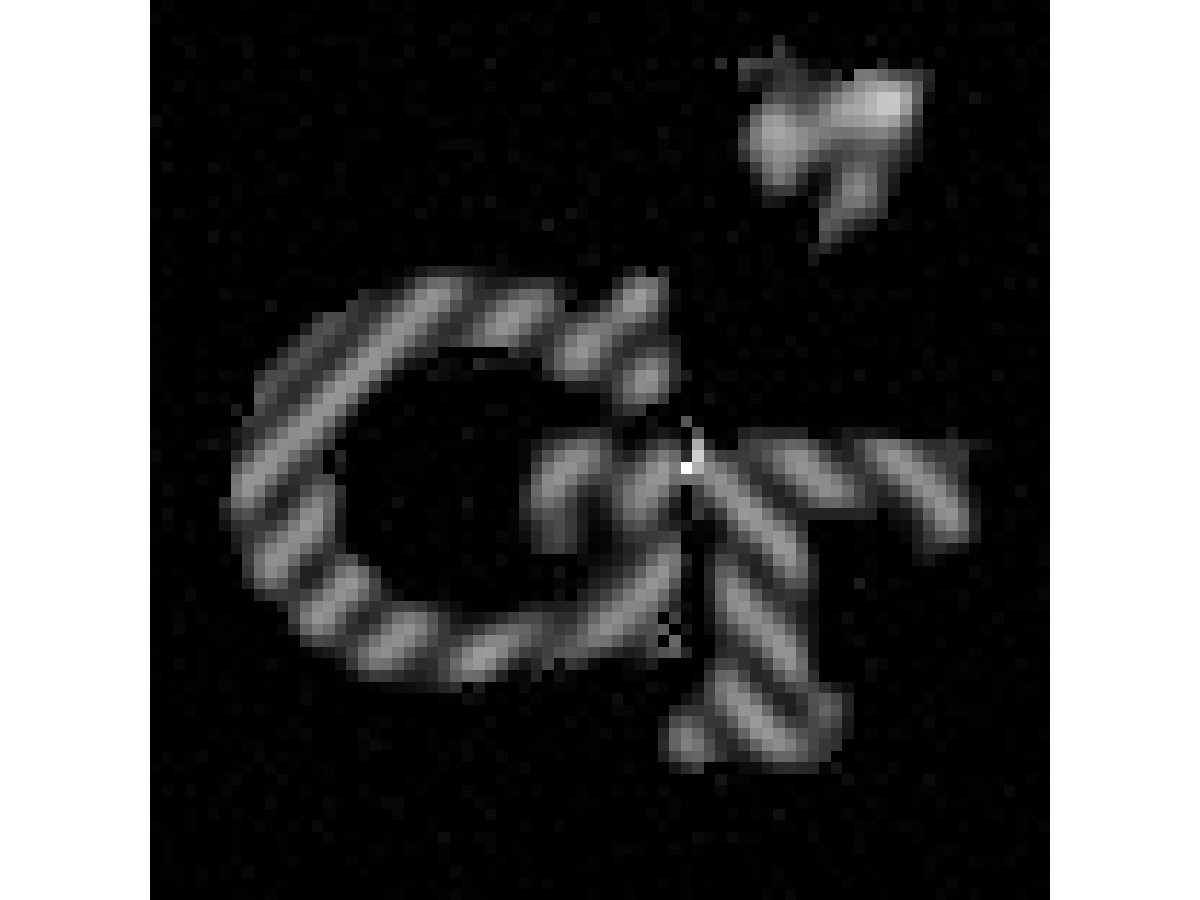}}
\end{minipage}

\caption{\small \sl Aperture coded imaging for a multi-depth image. (a) represents the conventional method, which uses about ~1100 beams. (b), (c) and (d) show reconstruction results with $320$, $160$ and $80$ generic beams.}
\label{fig:MDmain}
\end{figure}

Aperture coding is also effective for multi-depth images. A scene with three segments, each at a different depth was considered. The depth map of the scene used in the simulations is shown in Figure \ref{DM}. In Figure \ref{fig:MDmain}, we present simulation results when the depth profile was assumed to be known a priori. Again only about 80 beams are sufficient to get good quality reconstructions. 

We further explore the performance of coded aperture imaging when the depth profile is unknown. Boufounos in \cite{6289146} describes a method to infer the depth profile of an image using CoSaMP algorithm. The algorithm uses full array measurements in the least squares step. We replace this step with sketched least squares and coded measurements. Simulation results are shown in Figure \ref{fig:MDUDmain}. We note that aperture coding still performs well, with the relative error being negligible even for 160 generic beams.

\begin{figure}[!tb]
\centering
\begin{minipage}{.32\linewidth}
\centering
\subfloat[\small \sl full imaging]{\label{MDUD:a}\includegraphics[scale=.23]{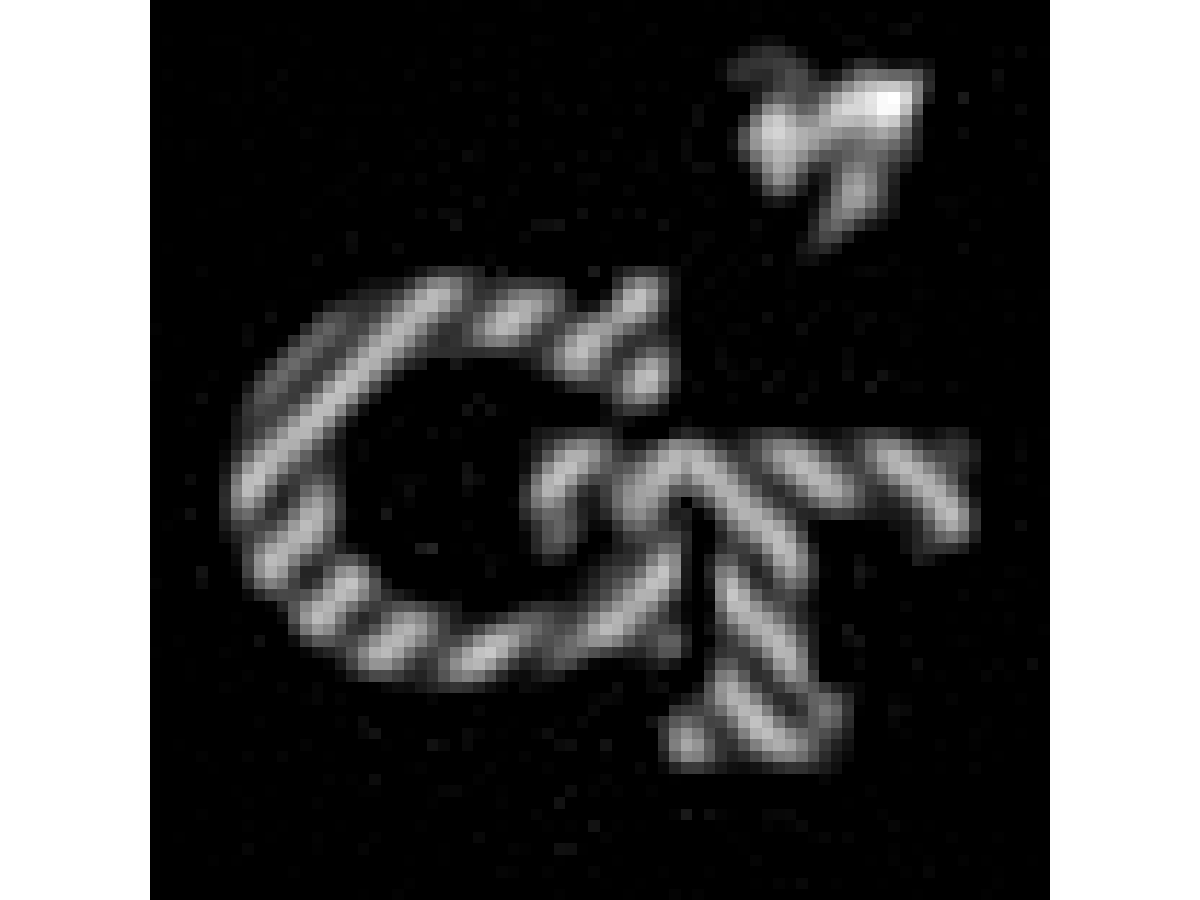}}
\end{minipage}%
\hspace{1.4em}
\begin{minipage}{.32\linewidth}
\centering
\subfloat[\small \sl 320 generic beams]{\label{MDUD:c}\includegraphics[scale=.23]{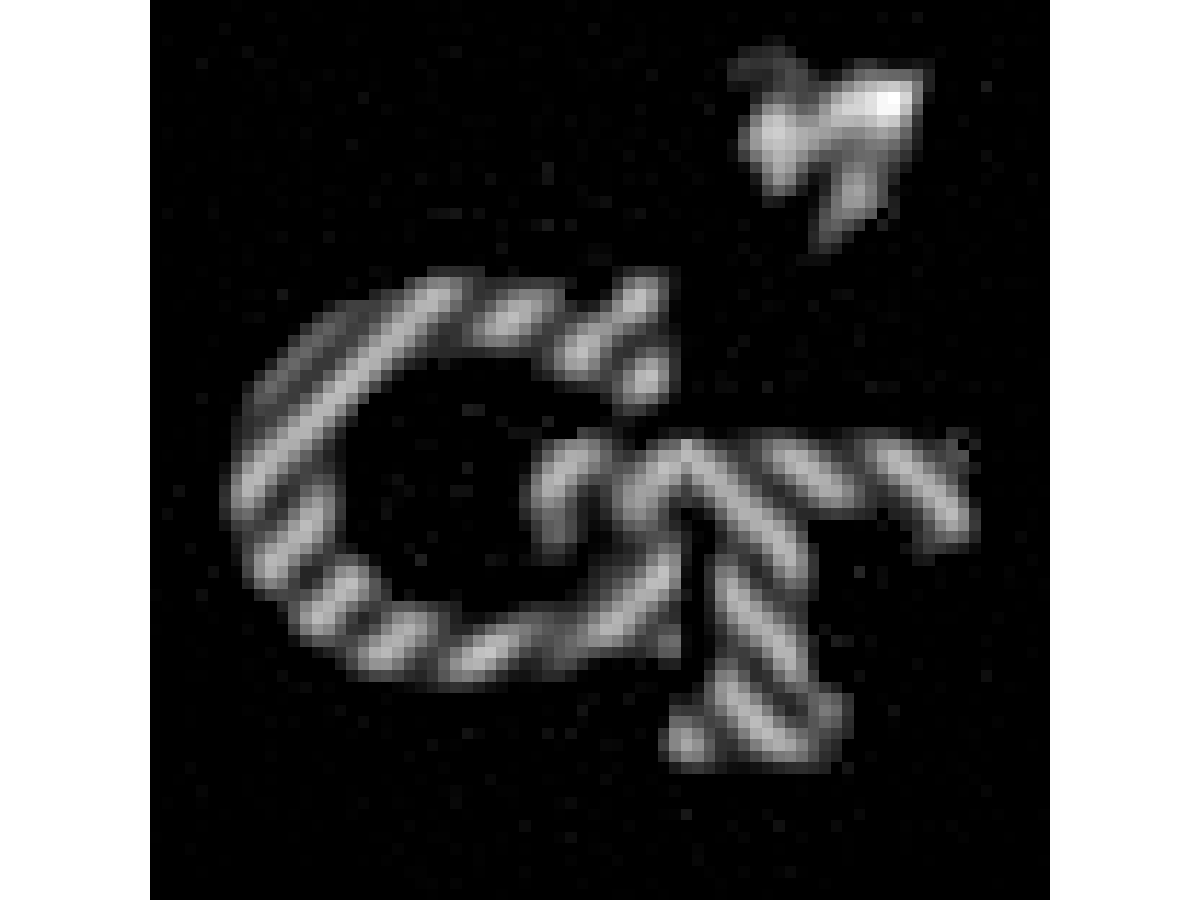}}
\end{minipage}
\par\medskip
\begin{minipage}{.32\linewidth}
\centering
\subfloat[\small \sl 160 generic beams]{\label{MDUD:d}\includegraphics[scale=.23]{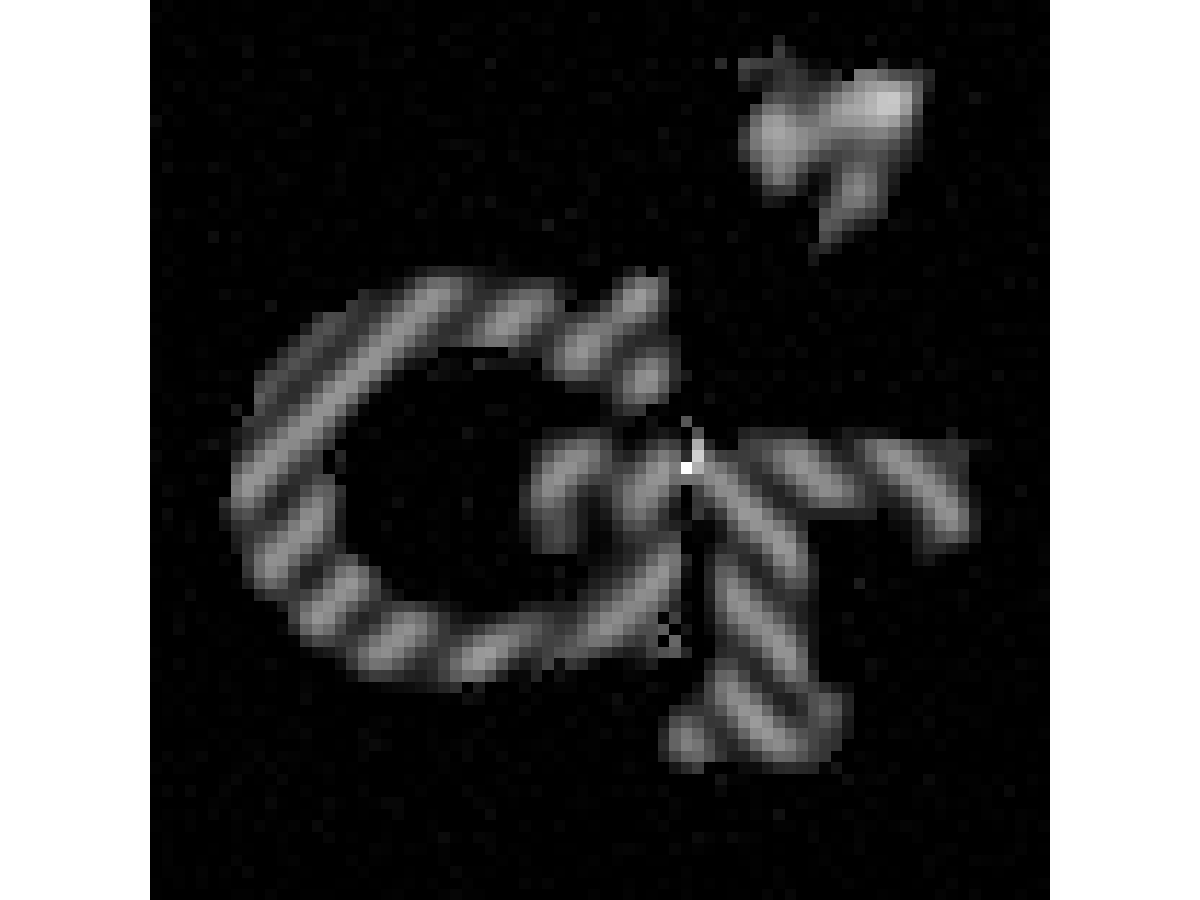}}
\end{minipage}
\hspace{1.4em}
\begin{minipage}{.32\linewidth}
\centering
\subfloat[\small \sl 80 generic beams]{\label{MDUD:e}\includegraphics[scale=.23]{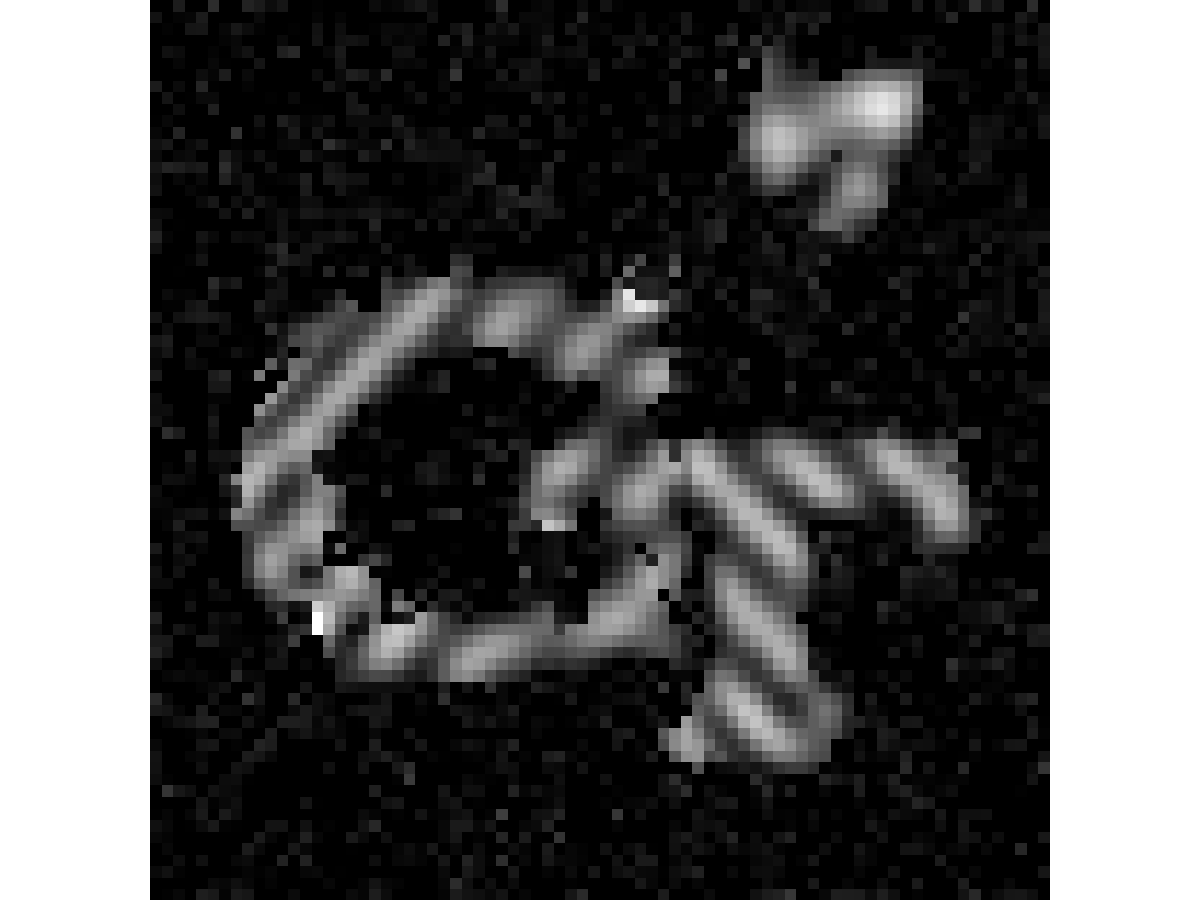}}
\end{minipage}

\caption{\small \sl Aperture coded imaging for a multi-depth image with an unknown range profile. (a) represents the conventional method, which uses about ~1100 beams.}
\label{fig:MDUDmain}

\end{figure}

\begin{table*}
\centering
\renewcommand{\arraystretch}{1.5}
\begin{tabular}{|c|c|c|c|c|c|c|}
\hline
\multirow{2}{*}[-5pt]{Imaging mode}& \multirow{2}{1.5cm}[-5pt]{\centering Constant range (CR)} & \multirow{2}{1.5cm}[-5pt]{\centering Flat surface (FS)} & \multirow{2}{2cm}{\centering Multi-depth, known \\ range profile (MD)} & \multirow{2}{2cm}{\centering Multi-depth, unknown range profile (MDUR)} & \multicolumn{2}{c|}{Constant range noisy} \\[6pt]
\cline{6-7}
& & & & &  Signal error & OP SNR \\[6pt]
\hline
Full (1100 beams) & NA & NA & NA & NA  & NA & 16.09 \\ 
\hline

320 codes & 2.7e-5& 4.4e-4& 3e-4& 1.1e-2  & 2.5e-5 & 16.0117 \\
\hline

160 codes  & 7.4e-5 & 3.3e-4& 1.2e-3& 3.1e-2 & 6.3e-5 & 15.9112 \\
\hline

80 codes & 4.2e-4 & 5.1e-4& 4.8e-2& 1.6e-1 & 4.9e-4 & 15.3206 \\ 
\hline
\end{tabular}
\caption{\small \sl Relative reconstruction error values for different classes of images. Aperture codes with weights chosen from standard normal distribution were used. The last column shows the output SNR as calculated using equation \eqref{eq:opSNR}.  }
\label{table:errors}
\end{table*}

\textbf{C. Imaging with noise} 

As claimed before, aperture coded imaging can still perform well in the presence of noise. We simulated an imaging scenario at $20$dB input SNR level. In Figure \ref{fig:noisy}, we show reconstruction results of these simulations. The corresponding signal reconstruction error values and output SNR are given in Table \ref{table:errors}. It is clear that aperture coded measurements do not result in any degradation in the presence of noise. This shows that the sketched matrix is not only full rank, but also has stable eigenvalues.

\begin{table}
\centering
\renewcommand{\arraystretch}{1.5}
\begin{tabular}{|c|c|c|c|c|}
\hline
Imaging mode & CR & FS & MD & MDUR \\
\hline
320 elements & 2e-4 & 1e-4 & 1e-3 & 3.8e-2 \\
\hline
160 elements & 9e-4 & 4e-4 & 3.1e-2 & 6.8e-2\\
\hline
80 elements & 1.6e-3 & 1.1e-3 & 5.8e-3 & 18.4e-2\\
\hline
\end{tabular}
\caption{\small \sl Relative reconstruction errors with subsampled array: The array was randomly subsampled to have 320, 160 and 80 elements. Abbreviations are as in Table \ref{table:errors}.}
\label{table:sub}
\end{table}

\textbf{D. Imaging scenes with higher range limits}

 Aperture coding can be highly effective even when imaging scenes with higher range limits. As predicted by our theorems, when the range limit increases, more measurements are required, but it can still less be than that used in conventional imaging. We demonstrate this in our next set of simulations. We consider an object made up of five layers as shown in Figure \ref{RL:a}. These layers lie in a region of width $15$cm, in the far-field of a 2D array at a depth of $20$m. We again use 15 excitation wavelengths between $7.5$cm and $15$cm. In Figure \ref{fig:RL}, we present reconstruction results by using 640, 480, and 320 aperture codes in place of 1100 beams used in conventional imaging.

\begin{figure}[!tb]
\centering
\begin{minipage}{.35\linewidth}
\centering
\subfloat[\small \sl full imaging]{\label{CRnoisy:a}\includegraphics[scale=.2]{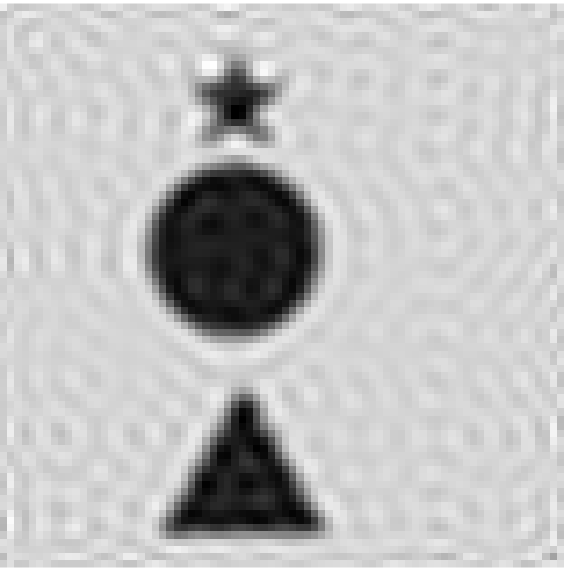}}
\end{minipage}%
\begin{minipage}{.35\linewidth}
\centering
\subfloat[\small \sl 320 generic beams]{\label{CRnoisy:c}\includegraphics[scale=.2]{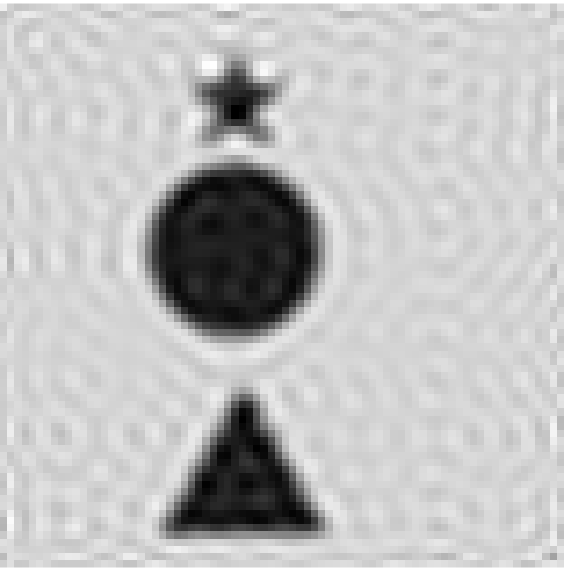}}
\end{minipage}
\par\medskip
\begin{minipage}{.35\linewidth}
\centering
\subfloat[\small \sl 160 generic beams]{\label{CRnoisy:d}\includegraphics[scale=.2]{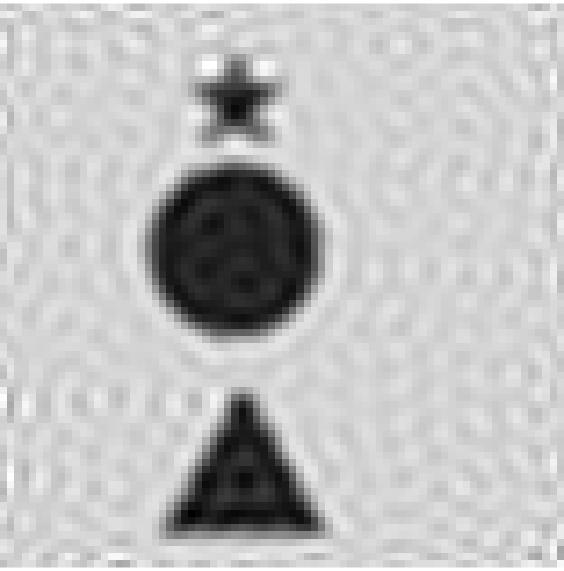}}
\end{minipage}
\begin{minipage}{.35\linewidth}
\centering
\subfloat[\small \sl 80 generic beams]{\label{CRnoisy:e}\includegraphics[scale=.2]{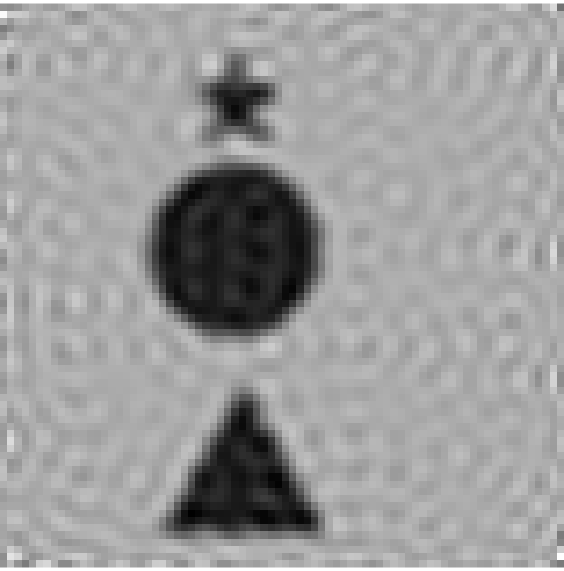}}
\end{minipage}

\caption{\small \sl Aperture coding in the presence of noise. The regularization parameters was varied in each case to match the SNR of the full imaging scenario. The noise performance was preserved without compromising on the signal reconstruction quality, as seen in Table \ref{table:sub}. }
\label{fig:noisy}

\end{figure}

\begin{figure*}[!tb]
\begin{minipage}{.5\linewidth}
\centering
\subfloat[\small \sl Full reconstruction of a set of slices of a sphere $\sim$ 1100 measurements]{\label{RL:a}\includegraphics[scale=0.28]{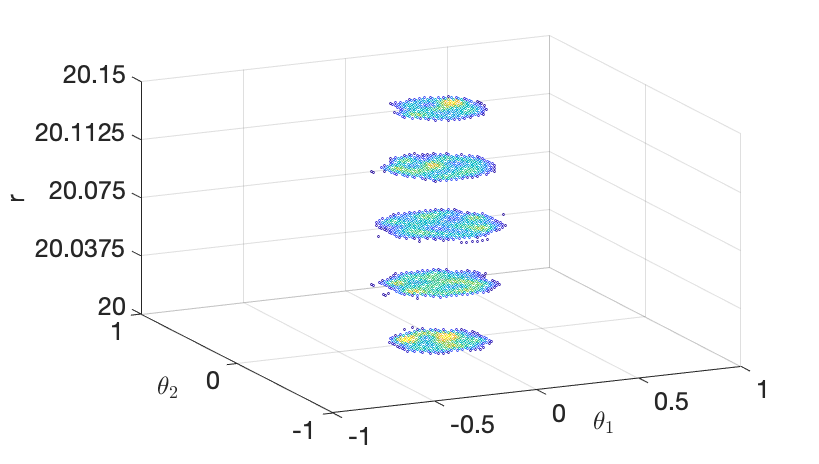}}
\end{minipage}
\begin{minipage}{.5\linewidth}
\centering
\subfloat[\small \sl 640 generic  beams, relative  error: 0.0168]{\label{RL:b}\includegraphics[scale=.28]{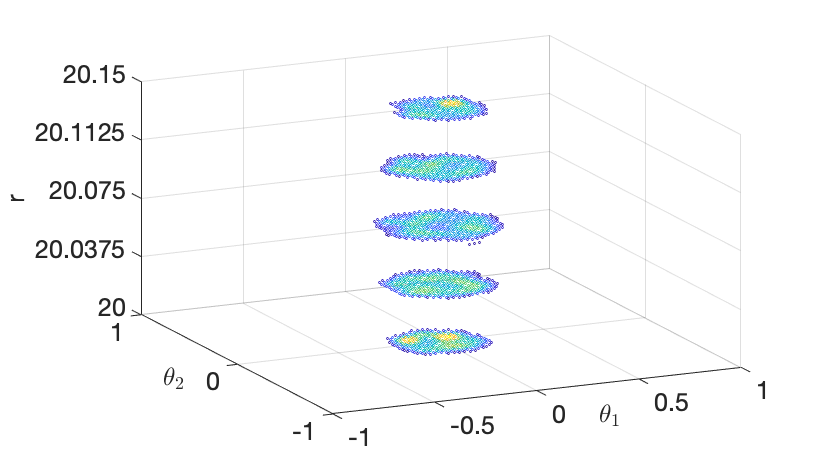}}
\end{minipage}
\par\medskip
\begin{minipage}{.5\linewidth}
\centering
\subfloat[\small \sl 480 generic beams, relative  error: 0.0298]{\label{RL:c}\includegraphics[scale=.28]{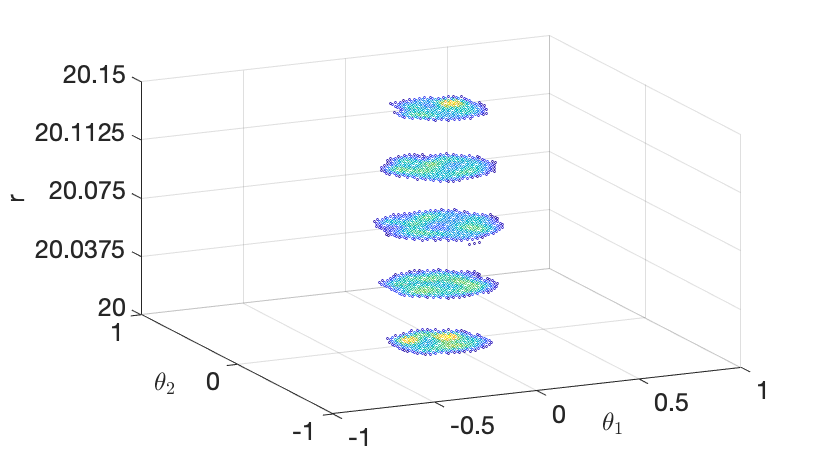}}
\end{minipage}
\begin{minipage}{.5\linewidth}
\centering
\subfloat[\small \sl 320 generic  beams, relative  error: 0.0466]{\label{RL:d}\includegraphics[scale=.28]{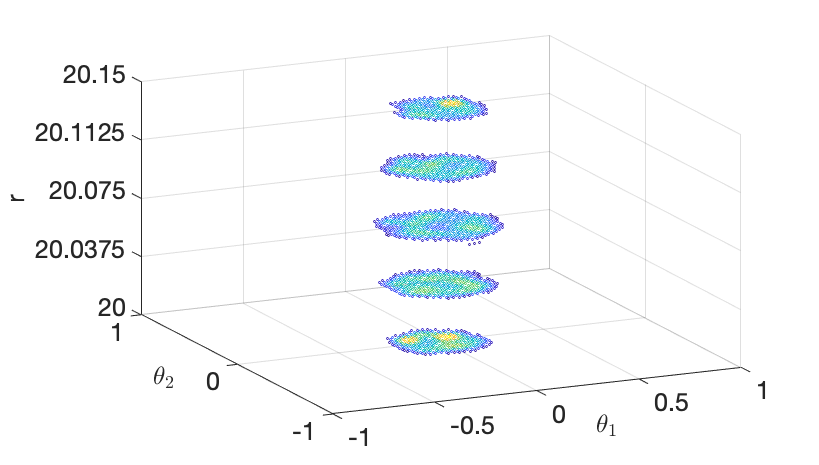}}
\end{minipage}
\caption{\small \sl Reconstruction results for a target object with (at $20$m) with aperture coded acquisition. The target scene consists of five discs one behind the other. In the figure above, the 2D array would face the discs from below. Using a set of $15$ excitation wavelengths between $7.5$cm and $15$cm such a scene can be reconstructed using only about 320 spatial measurements, unlike conventional beamforming which would require about $1100$ measurements.}
\label{fig:RL}
\end{figure*}


\section{Conclusion}

In this paper, we identify that broadband array imaging of range-limited target scenes is a particular case of spatio-spectral concentration, which results in the reconstructed image having a limited number of degrees of freedom. However, any image acquisition method that takes advantage of this low dimensional structure has to also be practically feasible. We show both theoretically and using simulations that existing array architectures such as aperture codes and subsampled arrays offer a way to make use of the low dimensionality to perform array imaging with far fewer spatial measurements than conventional methods when broadband excitation is used. 

To establish the theoretical guarantees, we modeled the array imaging problem as a sketched least squares problem with a particular sketching matrix that has a block diagonal structure. We show that  even with such structured randomness, the sketching matrix can be as small/compressive as a more generic dense sketching matrix. In terms of array imaging, this implies that existing array architectures can be used along with broadband excitation to image range-limited target scenes with very few spatial measurements.

\appendix

\section{ Proof of Theorem \ref{thm:final_k2}}
The main tool we use to prove Theorems \ref{thm:final_k2} and \ref{thm:manyK} is polynomial identity testing: if a polynomial of a finite total degree is said to be identically zero if the coefficient of every monomial term is zero. Any polynomial which is not identically zero, when evaluated at a random point drawn from a continuous distribution, is non-zero with probability 1, since the set of roots of the polynomial has measure zero with respect to the field of real numbers.  In conclusion, a multivariate polynomial evaluated at a multi-dimensional point drawn from a continuous distribution is non-zero with probability 1, if any of the polynomial coefficients are non-zero. 

To show that a given matrix is of rank $r$, we choose a suitable submatrix of size $r \times r$ and show that it has a determinant with at least one non-zero coefficient and therefore not identically zero. For random projections whose entries are drawn from the standard normal distribution, we hence need to show that the sketched matrices have submatrices of desirable sizes that have a non-zero determinant with probability 1.

For ease of notation, we prove the result on the transposed matrices: $A = [A_1 \ A_2]$, $\|(I - P_M)A \| = 0$, $M = [A_1 P \  A_2 P]$, and $P$ is the sketching matrix with entries drawn from the standard normal distribution. For ease of explanation, we assume that $A_1$ and $A_2$ are orthonormal matrices: $A_i^TA_i = I$ and $\calR(A_1) = \calR(A_2)$. We later explain how our proof holds for any two general matrices. 

Consider the matrix  $Z = A_1^TM = [P \ \ \ UP]$ where $U = A_1^TA_2$ is an orthogonal matrix. If $Z$ is full column rank, then $\operatorname{rank}(M) $ is full column rank. 
Any orthogonal matrix can be decomposed as 
\begin{equation}
\label{eq:canonical_ortho}
U = Q^TRQ
\end{equation}
where $Q$ is an orthobasis and $R$  is a block diagonal matrix with $2 \times 2$ or $1 \times 1$ blocks. The $2 \times 2$ blocks are of the form 
\[ R_i = \begin{bmatrix}
\cos\theta_i & -\sin\theta_i \\
\sin\theta_i & \cos\theta_i
\end{bmatrix} \]
with $\cos\theta_i\pm j\sin\theta_i$ being a pair of complex conjugate eigenvalues of $U$, $ j = \sqrt{-1}$. The $1\times 1$ diagonal blocks are equal to $\pm 1$ and are also eigenvalues of $U$. \eqref{eq:canonical_ortho} is referred to as the canonical decomposition of $U$. 

Notice that $\text{rank}([P \ \ \ UP]) = \text{rank}(Q [P \ \ \ UP]) = \text{rank}([QP \ \ \ RQP])$. Since $Q$ is orthogonal, $QP$ is also a matrix with i.i.d.\ standard normal variables due to the rotational invariance of the standard normal distribution. Thus we can directly work with the matrix $Z = [P \ \ \ RP]$. The following lemma provides the necessary condition on the multiplicity of any real eigenvalue for $[P \ \ \ RP]$ to be full column rank.

\begin{lemma}
Let there be a real eigenvalue of $U$ that has an algebraic multiplicity $n_1 > n/2$. Then for some $n - n_1 < l <n/2$, $Z = [P \ \ \ RP]$ is not full column rank. 
\end{lemma}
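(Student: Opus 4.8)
The plan is to exploit the block structure that a high--multiplicity real eigenvalue forces upon $R$, and to show that the second block $RP$ can contribute almost nothing new to the column space of $Z$. Since $U$ (and hence $R$) is orthogonal, its only possible real eigenvalues are $\pm 1$; let $\epsilon \in \{+1,-1\}$ denote the real eigenvalue of algebraic multiplicity $n_1 > n/2$. Because the canonical blocks of $R$ may be permuted by an orthogonal permutation, which merely reorders the rows of $P$ and so preserves its i.i.d.\ standard normal law, I would first arrange
\[
R = \begin{bmatrix} \epsilon I_{n_1} & 0 \\ 0 & R' \end{bmatrix},
\]
where $R'$ is $(n-n_1)\times(n-n_1)$ and does not have $\epsilon$ as an eigenvalue, so that $R' - \epsilon I$ is invertible.

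Next I would perform the elementary column operation that replaces the second block $RP$ by $RP - \epsilon P = (R - \epsilon I)P$; this leaves $\calR(Z)$ and hence $\operatorname{rank}(Z)$ unchanged. Partitioning $P = \bigl[\begin{smallmatrix} P_1 \\ P_2 \end{smallmatrix}\bigr]$ conformally with $R$, the key computation is
\[
(R - \epsilon I)P = \begin{bmatrix} 0 \\ (R' - \epsilon I)P_2 \end{bmatrix},
\]
so the second block now lives entirely in the coordinate subspace spanned by the last $n - n_1$ rows. Since $R' - \epsilon I$ is invertible, $\operatorname{rank}\bigl((R'-\epsilon I)P_2\bigr) = \operatorname{rank}(P_2) \le n - n_1$, while $\operatorname{rank}(P) \le l$. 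Subadditivity of rank then gives the bound $\operatorname{rank}(Z) \le l + (n - n_1)$, which holds for every realization of $P$.

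Finally, I would observe that because $n_1 > n/2$ we have $n - n_1 < n/2$, so the interval $(n - n_1,\, n/2)$ is nonempty. For any $l$ in this interval, $l > n - n_1$ yields $\operatorname{rank}(Z) \le l + (n - n_1) < 2l$, i.e.\ $Z$ is not full column rank, as claimed. The main conceptual point, and the only step requiring care, is the recognition that $R - \epsilon I$ annihilates precisely the $n_1$--dimensional eigenspace, collapsing the innovation contributed by the second block to at most $n - n_1$ dimensions; everything else is bookkeeping. I would emphasize that the conclusion is in fact deterministic, holding for all $P$ rather than merely with probability one, which is exactly the obstruction needed for the necessity direction of Theorem \ref{thm:final_k2}.
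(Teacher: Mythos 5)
Your proof is correct, and it rests on the same structural fact the paper uses: the $n_1$-dimensional eigenspace of the real eigenvalue forces the second block of $Z$ to contribute at most $n-n_1$ new directions. The mechanics differ slightly. The paper partitions the rows of $Z$ into eigenspace coordinates and the rest, notes that the top block $[P_1 \;\; \lambda_o P_1]$ has the $l$-dimensional null space spanned by vectors of the form $(-\lambda_o v, v)$ while the $n_2$-row bottom block has nullity at least $2l-n_2$, and concludes by dimension counting that the two null spaces intersect nontrivially, thereby exhibiting an explicit null vector of $Z$. You instead perform the column operation $[P \;\; RP]\mapsto[P \;\; (R-\epsilon I)P]$ and invoke subadditivity of rank to get $\operatorname{rank}(Z)\le l+(n-n_1)<2l$. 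These are dual views of the same computation --- your column operation is precisely the change of coordinates under which the paper's null vectors become visible --- but yours has the small advantages of yielding a quantitative rank deficit and of making explicit that the conclusion is deterministic rather than probabilistic. Two cosmetic remarks: the specialization to $\epsilon=\pm1$ and the observation that the block permutation preserves the Gaussian law are both unnecessary, since the argument is deterministic and works verbatim for any real eigenvalue $\lambda_o$; what you do need (and correctly get from orthogonality, i.e.\ normality, of $U$) is that the algebraic and geometric multiplicities of $\lambda_o$ coincide, so that $R-\lambda_o I$ genuinely annihilates an $n_1$-dimensional coordinate block rather than acting nilpotently on it.
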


\begin{proof}
Let $\lambda_o$ be a real eigenvalue with algebraic multiplicity  $n_1 > n/2$ and let $n_2 = n - n_1$. Then 
\begin{equation}
\label{eq: necessity_condition}
Z = [P \ \ \ RP] = \begin{bmatrix}
P_1 & \lambda_o P_1 \\
P_2 & R_2P_2
\end{bmatrix}
\end{equation} where $P_1$ is the submatrix formed by the first $n_1$ rows of $P$ and $R_2$ is the submatrix formed by the last $n_2$ rows and columns of $R$. Let $l = n_2/2 + q $, $q > n_2/2$. Then, the submatrix $[P_2 \ \ \ R_2P_2]$ has a null space $\mathcal{N}_1$ of dimension $2q > n_2$ and the submatrix $[P_1 \ \ \ \lambda_o P_1]$ has an $l$ dimensional null space $\mathcal{N}_2$ given by the range of $\begin{bmatrix}
-\lambda_o I \\
I
\end{bmatrix}$. Since $2q + l > 2l$, $\mathcal{N}_1 \cap \mathcal{N}_2 \neq \phi$. The result follows. 
\end{proof}

We now prove some lemmas that provide sufficient conditions on $R$ for which $[P \ \ \ RP]$ is full column rank. We state the lemmas along with their proofs and then use them to prove Theorem \ref{thm:final_k2}. 

\begin{lemma}
\label{thm:conjugate_eig}
Let there be $l_o$ pairs of complex conjugate eigenvalues of $R$ with non-zero imaginary part. Then for any $l \leq l_o$, $Z = [P \ \ \ RP]$ has full column rank.
\end{lemma}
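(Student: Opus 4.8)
The plan is to prove the statement with probability one via polynomial identity testing, exactly as set up in the preamble to this appendix. Since $R$ is already in canonical form, it is block diagonal with $l_o$ rotation blocks $R_{\theta_1},\dots,R_{\theta_{l_o}}$ (each $2\times 2$ with $\sin\theta_i\neq 0$) together with some $1\times 1$ blocks equal to $\pm 1$. The matrix $Z=[P\ \ RP]$ has $2l$ columns, and it has full column rank precisely when some $2l\times 2l$ minor of $Z$ is nonzero. Each such minor is a polynomial in the entries of $P$, so by the polynomial-identity-testing argument it suffices to exhibit a single choice of $P$ for which one $2l\times 2l$ minor does not vanish; non-vanishing for a random standard-normal $P$ then follows with probability $1$.

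First I would use the hypothesis $l\le l_o$ to assign, to the $j$-th column of $P$ (for $j=1,\dots,l$), a distinct rotation block, say block $j$ occupying rows $\{2j-1,2j\}$. Concretely I would take a test $P$ whose $j$-th column is supported only on rows $\{2j-1,2j\}$, with entries $a_j,b_j$ there. Because $R$ is block diagonal, the $j$-th column of $RP$ is then also supported on rows $\{2j-1,2j\}$ and equals $R_{\theta_j}[a_j\ b_j]^T$ placed in those rows. Restricting $Z$ to the $2l$ rows $\bigcup_{j=1}^l\{2j-1,2j\}$ and reordering its columns to pair the $j$-th column of $P$ with the $j$-th column of $RP$ yields a block-diagonal $2l\times 2l$ matrix whose $j$-th diagonal block is $\bigl[\begin{smallmatrix} a_j & a_j\cos\theta_j-b_j\sin\theta_j \\ b_j & a_j\sin\theta_j+b_j\cos\theta_j\end{smallmatrix}\bigr]$.

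The crux of the argument is the single-block computation: the determinant of this $2\times 2$ block simplifies to $(a_j^2+b_j^2)\sin\theta_j$, so choosing $a_j,b_j$ not both zero makes it nonzero precisely because $\sin\theta_j\neq 0$. Hence the reordered submatrix has determinant $\pm\prod_{j=1}^l(a_j^2+b_j^2)\sin\theta_j\neq 0$, which shows the corresponding $2l\times 2l$ minor of $Z$ is not identically zero as a polynomial in the entries of $P$. By polynomial identity testing, a standard-normal $P$ makes this minor nonzero with probability $1$, so $Z=[P\ \ RP]$, and therefore $M$, has full column rank with probability $1$ for every $l\le l_o$. The only real obstacle is the bookkeeping: one must check that restricting $P$ to disjoint rotation blocks keeps all of the action of $Z$ inside the chosen $2l$ rows (so that the minor genuinely captures the rank), and that it is the rotation structure that causes each block to contribute a full two-dimensional increment. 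The identity $(a_j^2+b_j^2)\sin\theta_j$ is precisely the algebraic reason a rotation with nonzero angle lets one column of $P$ contribute two linearly independent columns of $Z$, which is what the real-eigenvalue case of the necessity lemma lacks.
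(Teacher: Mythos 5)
Your argument is correct and takes essentially the same route as the paper: both reduce to a $2l\times 2l$ minor supported on the rows of $l$ distinct rotation blocks, invoke polynomial identity testing, and rest on the same key fact that each $2\times 2$ rotation block with $\sin\theta_j\neq 0$ contributes a nonzero factor. The only cosmetic difference is that you certify the minor is not identically zero by evaluating at a sparse witness $P$ (yielding $\pm\prod_{j}(a_j^2+b_j^2)\sin\theta_j$), whereas the paper extracts the coefficient of the monomial $\prod_{j}p_{j,1,j}^2$, namely $\prod_{j}\sin\theta_j$, which is exactly the coefficient of $\prod_j a_j^2$ in your determinant.
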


\begin{proof}
 Let $P$ be expressed as 

\begin{equation*}
P = \begin{bmatrix}
P_1 \\
P_2\\
\vdots \\
P_{\frac{n}{2}}
\end{bmatrix}
\end{equation*} where $P_i \in \R^{2 \times l}$. Let $P_{i,j,k}$ denote the $(j,k)$th element of $P_i$. Rearranging the columns of $Z$, we obtain

\begin{equation}
\label{eq:z_rearranged}
\widehat{Z}_1 = \begin{bmatrix}
p_{1,1,1} & \cos\theta_1 p_{1,1,1} - \sin\theta_1 p_{1,2,1} & \cdots  \\
p_{1,2,1} & \sin\theta_1 p_{1,1,1} - \cos\theta_1 p_{1,2,1} & \cdots \\
\vdots & \vdots &  \\
p_{\frac{n}{2},1,1} & \cos\theta_1 p_{\frac{n}{2},1,1} - \sin\theta_1 p_{\frac{n}{2},2,1} & \cdots  \\
p_{\frac{n}{2},2,1} & \sin\theta_1 p_{\frac{n}{2},1,1} - \cos\theta_1 p_{\frac{n}{2},2,1} & \cdots
\end{bmatrix}
\end{equation}

Expanding the determinant of $\widehat{Z}$, the coefficient of the term $\underset{i=1,2,\cdots,l} {\prod}p_{i,1,i}^2$ is $\underset{i = 1,2,\cdots,l}{\prod} \sin\theta_i$. For any $l \leq n/2$, $\det(\widehat{Z}(2lj-2j+1:2lj, 2lj-2j+1:2lj))$ has a term of the form $\underset{i = 2j-1...2j+l-1}{\prod}p_{j,1,j}^2$ with coefficient $\underset{i = 2j-1...2j+l-1}{\prod}\sin\theta_i$. If there is a set of $l$ rotations $R_i$ such that $\theta_i \neq 0 \ \forall \ i$, then clearly, the determinant associated with this $2l \times 2l$ diagonal block in $\widehat{Z}$ is not equal to $0$ with probability $1$ and the result follows.
\end{proof}

Let the number of complex conjugate pairs of eigenvalues be $n_1$. We next group as many real eigenvalues as possible into pairs such each pair has distinct eigenvalues. Let the number of such eigenvalue pairs be $n_2$. The remaining eigenvalues are all real and equal to each other. Let the number of such eigenvalues be $2n_3$ (this number will be even since $n$ is even).

\begin{lemma}
\label{thm:pairs}
For $l \leq n_2$, $Z = [P \ \ \ RP]$ has full column rank.
\end{lemma}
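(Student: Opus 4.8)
The plan is to prove Lemma~\ref{thm:pairs} by the same polynomial identity testing argument used for Lemma~\ref{thm:conjugate_eig}, with the rotation blocks replaced by the $2\times 2$ diagonal blocks of $R$ formed from the $n_2$ pairs of distinct real eigenvalues. Recall that $R$ is block diagonal; I would collect the real eigenvalues into $n_2$ pairs $(a_i,b_i)$ with $a_i\neq b_i$ (in the orthonormal case $U=A_1^TA_2$ these are exactly the $(+1,-1)$ pairs), each contributing a diagonal block $\mathrm{diag}(a_i,b_i)$. Writing $P$ in $2\times l$ blocks $P_i$ with entries $p_{i,j,k}$ as in the proof of Lemma~\ref{thm:conjugate_eig}, the two rows of $Z=[P\ \ RP]$ belonging to the $i$-th such pair read $\big(p_{i,1,k},\,a_i p_{i,1,k}\big)$ and $\big(p_{i,2,k},\,b_i p_{i,2,k}\big)$ across the $P$- and $RP$-columns.

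First I would select $l$ of the $n_2$ distinct real pairs (possible since $l\le n_2$) and form the $2l\times 2l$ submatrix $\widehat Z$ of $Z$ on their $2l$ rows, keeping all $2l$ columns. For the $i$-th chosen pair, pairing the $P$-column indexed $i$ with the $RP$-column indexed $i$ gives the $2\times 2$ minor
\[
\det\begin{bmatrix} p_{i,1,i} & a_i p_{i,1,i}\\ p_{i,2,i} & b_i p_{i,2,i}\end{bmatrix}=(b_i-a_i)\,p_{i,1,i}\,p_{i,2,i},
\]
which is nonzero precisely because $a_i\neq b_i$; this is the analogue of the $\sin\theta_i$ factor appearing in Lemma~\ref{thm:conjugate_eig}. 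The target monomial is then $\mu=\prod_{i=1}^{l}p_{i,1,i}\,p_{i,2,i}$, and the goal is to show that its coefficient in $\det\widehat Z$ equals $\pm\prod_{i=1}^{l}(b_i-a_i)\neq 0$.

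The main obstacle is that $\widehat Z$ is \emph{not} genuinely block diagonal: the $P$-column indexed $i$ contains entries $p_{i',s,i}$ coming from every pair $i'$, so off-block terms are present and could in principle pollute the coefficient of $\mu$. The key observation that resolves this is that every entry of $\widehat Z$ lying outside the designated $2\times 2$ blocks carries a variable $p_{i',s,k}$ with $k\neq i'$, and such variables do not occur in $\mu$. Hence any permutation in the Leibniz expansion that uses an off-block entry introduces a variable absent from $\mu$ and cannot contribute to its coefficient; the only surviving permutations decompose into independent $2\times 2$ matchings on disjoint row/column sets, so the coefficient of $\mu$ factorizes as claimed. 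Since $a_i\neq b_i$ for each chosen pair, $\det\widehat Z$ is a polynomial with a nonzero coefficient, hence not identically zero; by polynomial identity testing it is nonzero with probability $1$ when the entries of $P$ are drawn from a continuous distribution, giving that $Z$ has full column rank with probability $1$ for every $l\le n_2$.
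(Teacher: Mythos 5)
Your proposal is correct and follows essentially the same route as the paper: both isolate a $2l\times 2l$ submatrix built from $l$ distinct-eigenvalue pairs, track the monomial $\prod_{i=1}^{l}p_{i,1,i}\,p_{i,2,i}$ in the Leibniz expansion, and identify its coefficient as $\pm\prod_{i=1}^{l}(\lambda_{i,2}-\lambda_{i,1})\neq 0$, concluding by polynomial identity testing. The only difference is that you explicitly justify why off-block entries cannot contribute to this coefficient (each carries a variable $p_{i',s,k}$ with $k\neq i'$ absent from the target monomial), a point the paper's proof leaves implicit.
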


\begin{proof}
Again rearranging the columns of $Z$ as 

\begin{equation}
\label{eq:distinct_pairs}
\widehat{Z}_2 = \begin{bmatrix}
P_{1,1,1} & \lambda_{1,1} P_{1,1,1} & \cdots &  P_{1,1,l} & \lambda_{1,1} P_{1,1,n_2} \\
P_{1,2,1} & \lambda_{1,2} P_{1,2,1} & \cdots &  P_{1,2,l} & \lambda_{1,2} P_{1,2,n_2} \\
\vdots & \vdots &  & \vdots & \vdots \\
P_{\frac{n}{2},2,1} & \lambda_{n_2,2} P_{\frac{n}{2},2,1} & \cdots &  P_{\frac{n}{2},2,n_2} & \lambda_{n_2,2} P_{\frac{n}{2},2,n_2} \\
\end{bmatrix}
\end{equation} 
Here $\lambda_{i,j}$ denotes the $j^{\text{th}}$ eigenvalue of the $i^{\text{th}}$ pair. The coefficient of the $2l$ the degree term $\underset{i = 1,..,l}{\prod}{P_{i,1,i}P_{i,2,i}}$ is $\underset{i = 1,..,l}{\prod}{\lambda_{i2} - \lambda_{i1}} \neq 0$.  The determinant associated with this $2l \times 2l$ diagonal block in $\widehat{Z}$ is not equal to $0$ with probability $1$. The result follows.
\end{proof}

\begin{lemma}
\label{thm:repeated}
If $n_3 < n_1$, then for $l < 2n_3$, $Z = [P \ \ \ RP]$ has full column rank with probability 1. 
\end{lemma}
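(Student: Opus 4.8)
The plan is to show directly that $Z=[P\ \ RP]$ has a trivial null space with probability $1$, which is equivalent to full column rank $2l$. Since $P$ has i.i.d.\ standard normal entries, it suffices to verify that the relevant Gaussian submatrices are generically full rank. First I would partition the rows of $P$ (equivalently the blocks of $R$) into three groups according to the canonical decomposition: the $2n_1$ rows belonging to the $2\times2$ rotation blocks $R_1,\dots,R_{n_1}$ (complex eigenvalues $\e^{\pm \j\theta_i}$ with $\theta_i\in(0,\pi)$), the $2n_2$ rows belonging to the distinct real pairs, and the $2n_3$ rows belonging to the repeated real eigenvalue, on which $R$ acts simply as $\lambda_o I$ with $\lambda_o=\pm1$. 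Write $P_c\in\R^{2n_1\times l}$ and $P_r\in\R^{2n_3\times l}$ for the corresponding row subblocks of $P$, and $P_c^{(i)}$ for the $2\times l$ slice of $P_c$ attached to rotation block $i$.

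Next, suppose $Z\bigl[\begin{smallmatrix}a\\ b\end{smallmatrix}\bigr]=Pa+RPb=0$ for some $a,b\in\R^l$; the goal is to force $a=b=0$. Reading off the rows of the repeated-eigenvalue block gives $P_r(a+\lambda_o b)=0$. Because $l<2n_3$, the Gaussian matrix $P_r$ has full column rank with probability $1$, so $a=-\lambda_o b$. Substituting this relation into the rows of each rotation block yields $(R_i-\lambda_o I)P_c^{(i)}b=0$ for every $i$.

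The crux of the argument is that $R_i-\lambda_o I$ is invertible: the eigenvalues of $R_i$ are $\e^{\pm \j\theta_i}$ with $\theta_i\in(0,\pi)$, hence strictly complex, so they never coincide with the real number $\lambda_o=\pm1$, and $\det(R_i-\lambda_o I)=|\e^{\j\theta_i}-\lambda_o|^2=1-2\lambda_o\cos\theta_i+\lambda_o^2>0$. Therefore $P_c^{(i)}b=0$ for every $i$, i.e.\ $P_c b=0$. Since $n_3<n_1$ and $l<2n_3$ imply $l<2n_1$, the $2n_1\times l$ Gaussian matrix $P_c$ is full column rank with probability $1$; hence $b=0$, and then $a=-\lambda_o b=0$. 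This shows $Z$ has trivial null space, so it has full column rank $2l$ with probability $1$, as claimed.

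The main obstacle is conceptual rather than computational. On its own the repeated block $[\,P_r\ \ \lambda_o P_r\,]$ is badly rank deficient, because its two halves are proportional, so it cannot by itself furnish the needed $2l$ independent directions; this is exactly the obstruction exposed in the necessity lemma. The key point is that the repeated block instead pins any null vector to the one-dimensional relation $a=-\lambda_o b$, and the complex rotation blocks, whose eigenvalues avoid the real value $\lambda_o$, then annihilate $b$. The hypothesis $n_3<n_1$ is precisely what guarantees enough rotation-block rows ($2n_1>l$) to carry out the final Gaussian full-rank step, and notably the distinct-real-pair rows are never needed. If one prefers to mirror the determinant and polynomial-identity style of the preceding lemmas, the same conclusion follows by exhibiting a $2l\times2l$ submatrix whose determinant is a nonzero polynomial in the entries of $P$, but the null-space argument above is the more transparent route.
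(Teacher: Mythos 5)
Your proof is correct, but it takes a genuinely different route from the paper. The paper stays within its polynomial-identity-testing framework: it rearranges rows and columns to form the submatrix $\widehat{Z}_3$, which interleaves the rows of the $n_3$ rotation blocks with the rows of the repeated-eigenvalue block, and then exhibits a single monomial in the determinant expansion whose coefficient is $\prod_{i}\sin^2\theta_i\neq 0$, so the determinant is not identically zero and a random Gaussian $P$ avoids its zero set almost surely. You instead argue directly on the null space: the repeated block forces any null vector to satisfy $a=-\lambda_o b$ because the tall Gaussian block $P_r$ is injective a.s., and the rotation blocks then kill $b$ because $R_i-\lambda_o I$ is invertible (a real $\lambda_o=\pm1$ is never an eigenvalue of a genuine rotation) and the stacked Gaussian block $P_c$ is injective a.s.\ once $l<2n_3<2n_1$. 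Your version is more transparent, makes the role of the hypothesis $n_3<n_1$ explicit, and correctly observes that the distinct-real-pair rows are never used; it also cleanly explains \emph{why} the repeated block alone cannot work (its two halves are proportional), which connects to the necessity lemma. What the paper's determinant-style proof buys is compositionality: in the proof of Theorem \ref{thm:final_k2} the nonvanishing monomial coefficients from Lemmas \ref{thm:conjugate_eig}, \ref{thm:pairs}, and \ref{thm:repeated} are multiplied together to produce a single nonzero coefficient for the full $2l\times 2l$ determinant at $l=n/2$, so the explicit monomial and its coefficient $\prod_i\sin^2\theta_i$ are reused downstream. Your null-space argument proves the lemma as stated but does not directly furnish that monomial, so if you wanted to reprove the theorem in your style you would need to splice the three null-space arguments together (which is doable, but is an extra step the paper's bookkeeping avoids).
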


\begin{proof}
Consider the submatrix of $Z$ shown below, after rearranging the rows and columns

\begin{equation}
\label{eq:n3n1}
\widehat{Z}_3 = \begin{bmatrix}
P_{1,1,1} & c\theta_1 P_{1,1,1} - s\theta_1 P_{1,2,1} & \cdots \\
P_{n_1 + n_2 + 1,1,1} & \lambda_o P_{n_1 + n_2 + 1,1,1}  & \cdots  \\
P_{1,2,1} & s\theta_1 P_{1,1,1} + c\theta_1 P_{1,2,1} & \cdots  \\
P_{n_1 + n_2 + 1,2,1} & \lambda_o P_{n_1 + n_2 + 1,2,1}  & \cdots \\
\vdots & \vdots &  \\
P_{n_3,2,1} & s\theta_{n_3} P_{n_3,1,1} +c\theta_{n_3} P_{n_3,2,1} & \cdots \\
P_{n_1 + n_2 + n_3,2,1} & \lambda_o P_{n_1 + n_2 + n_3,2,1}  & \cdots
\end{bmatrix}
\end{equation}

The coefficient of the $2l^{\text{th}}$ degree term $\underset{i = 1,..,l/2}{\prod} \underset{k = 1,2}{\prod} P_{n_1+n_2+i,k,(i-1)*2+k}P_{i,2,(i-1)*2+k} $ in the determinant of $\widehat{Z}_3$ is equal to $ \underset{i=1,..,l/2}{\prod} \sin^2\theta_i$.

\end{proof}

\begin{lemma}
\label{lemma_n3n1}
If $n_3 > n_1$, then there is a real eigenvalue $\lambda_o$ with algebraic multiplicity greater than $n/2$.
\end{lemma}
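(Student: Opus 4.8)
The plan is to prove the lemma by a direct eigenvalue count, exploiting the fact that the only real eigenvalues available to the orthogonal matrix $U = A_1^T A_2$ (equivalently, to its canonical form $R$ in \eqref{eq:canonical_ortho}) are $+1$ and $-1$. First I would recall the spectral bookkeeping behind the definitions of $n_1, n_2, n_3$: since $U$ is orthogonal, its $n$ eigenvalues split into $n_1$ complex-conjugate pairs (the $2\times 2$ rotation blocks of $R$ with $\theta_i \neq 0$), accounting for $2n_1$ eigenvalues, together with $n - 2n_1$ real eigenvalues, each equal to $\pm 1$.

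Next I would analyze the greedy pairing that defines $n_2$ and $n_3$. Let $a$ be the number of $+1$ eigenvalues and $b$ the number of $-1$ eigenvalues, so that $a + b = n - 2n_1$. Because a pair of \emph{distinct} real eigenvalues must consist of one $+1$ and one $-1$, the maximal number of such pairs is $n_2 = \min(a,b)$, after which exactly $|a-b|$ copies of the majority value remain, all equal to a single value $\lambda_o \in \{+1,-1\}$; this is precisely the leftover block counted as $2n_3 = |a-b|$. (When $n_3 = 0$ the hypothesis $n_3 > n_1$ cannot hold, so we may assume $a \neq b$ and $\lambda_o$ is the unique strict majority.)

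The key observation is that $\lambda_o$ also appears once inside each of the $n_2$ distinct pairs, so its total algebraic multiplicity is $\max(a,b) = n_2 + 2n_3$, not merely the $2n_3$ leftovers. Combining the three groups gives $n = 2n_1 + 2n_2 + 2n_3$, i.e.\ $n_1 + n_2 + n_3 = n/2$, whence the algebraic multiplicity of $\lambda_o$ satisfies $(n_2 + 2n_3) - n/2 = (n_2 + 2n_3) - (n_1 + n_2 + n_3) = n_3 - n_1$. Therefore $n_3 > n_1$ forces the multiplicity of $\lambda_o$ to exceed $n/2$, which is exactly the claim.

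I expect the only real subtlety — rather than a genuine obstacle — to be the correct bookkeeping: one must notice that because an orthogonal matrix admits at most two distinct real eigenvalues, the leftover unpaired eigenvalues are automatically a single repeated value $\lambda_o$, and that the multiplicity of $\lambda_o$ must include its appearances inside the $n_2$ mixed pairs. Once this is set up, the conclusion reduces to the one-line arithmetic identity $\operatorname{mult}(\lambda_o) - n/2 = n_3 - n_1$.
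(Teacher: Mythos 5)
Your proof is correct; note that the paper itself states Lemma~\ref{lemma_n3n1} with no proof at all, so there is no argument of the authors' to compare against --- your counting argument fills that gap. The bookkeeping is exactly right: with $a$ copies of $+1$ and $b$ copies of $-1$, maximal distinct pairing gives $n_2=\min(a,b)$ and $2n_3=|a-b|$, the majority value $\lambda_o$ has total multiplicity $n_2+2n_3$, and together with $n_1+n_2+n_3=n/2$ this yields the clean identity $\operatorname{mult}(\lambda_o)-n/2=n_3-n_1$, which in fact proves the lemma as an equivalence, not just an implication. One remark worth adding: your argument leans on $U$ being orthogonal so that the only real eigenvalues are $\pm1$, but the identity $\operatorname{mult}(\lambda_o)=n_2+2n_3$ survives in the general real-Schur setting the paper invokes at the end of the appendix. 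If the leftover block is nonempty, a maximal pairing cannot contain any pair avoiding $\lambda_o$ (otherwise re-pairing that pair's two entries with two leftover copies of $\lambda_o$ would produce strictly more pairs), so every one of the $n_2$ pairs contains $\lambda_o$ and the same arithmetic goes through. Spelling that out would make your proof cover the general case the authors claim, at no extra cost.
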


\textbf{Proof of Theorem \ref{thm:final_k2}:}
With eigenvalues grouped as before, let $\theta_i$ parametrize the $2 \times 2$ diagonal block of $R$ due to the $i^{\text{th}}$ pair of complex conjugate eigenvalues, $\lambda_{j1}, \lambda_{j2}$ denote the $2 \times 2$ diagonal block due to the $j^{\text{th}}$ pair of real eigenvalues with $\lambda_{j1} \neq \lambda_{j2}$, and let the rest of eigenvalues be equal $\lambda_o$ (repeated at least $n_3$ times). Due to Lemma \ref{lemma_n3n1}, $n_3 < n_1$. Let $n_4 = n_1 - n_3$, and let $l = n/2 = n_3 + n_2 + n_1 = 2n_3 + n_2 + n_4$. Rearranging the columns of $Z$, we obtain
\begin{equation}
\label{eq:all}
\widehat{Z} = \begin{bmatrix}
p_ 1& Rp_1 & p_2 & Rp_2 & \cdots & p_l & Rp_l
\end{bmatrix}
\end{equation}
Then, using Lemmas \ref{thm:conjugate_eig}, \ref{thm:pairs}, \ref{thm:repeated}, the $2l^{\text{th}}$ degree term in the determinant of $\widehat{Z}$ of the form
\begin{align*} \prod_{i = 1}^{n_4} P_{n_3+i,1,n_3+i}^2  \prod_{j = 1}^{n_2} P_{n_1+j,1,n_1+j}P_{n_1+j,2,n_1+j}  \prod_{k_1 = 1}^{n_3}\prod_{k_2 = 1,2} P_{n_1 + n_2 + k_1,k_2, (k_1 - 1)*2+k_2} P_{k_1,2,(k_1-1)*2+k_2} \\
=    \prod_{i = 1}^{n_4}  \prod_{j = 1}^{n_2}\prod_{k_1 = 1}^{n_3} \prod_{k_2 = 1,2} P_{n_3+i,1,n_3+i}^2 P_{n_1+j,1,n_1+j} P_{n_1+j,2,n_1+j} P_{n_1 + n_2 + k_1,k_2, (k_1 - 1)*2+k_2} P_{k_1,2,(k_1-1)*2+k_2} 
\end{align*}
 has a coefficient given by 
\[ 
\prod_{i = 1}^{n_4} \sin\theta_{n_3+i}\prod_{j = 1}^{n_2} \lambda_{n_1+j,2} - \lambda_{n_1+j,1} \prod_{k = 1}^{n_3} \sin^2\theta_{k}  =  \prod_{i = 1}^{n_4} \prod_{j = 1}^{n_2}\prod_{k = 1}^{n_3} \sin\theta_{n_3+i} \sin^2\theta_{k} (\lambda_{n_1+j,2} - \lambda_{n_1+j,1}) \neq 0.  \]
Therefore, there is a $2l \times 2l$ submatrix in $Z$ with non-zero determinant, which renders $Z$ full column rank.  We have hence proved Theorem \ref{thm:final_k2} for the case where $A_i$'s are orthogonal and span the same subspace. 

The same result can be extended to any general square matrix $U = A^TB$. The proof for this can be obtained using methods exactly as above, but by operating on the real Schur decomposition of $U$. The real Schur decomposition of $U$ can be expressed as 
\[ U = QRQ^T\] where $R$ is no longer block diagonal but is a pseudo-upper triangular matrix with either $2 \times 2$ or $1 \times 1$ blocks along the diagonal that reflect the eigenvalue structure of $U$.

\section{Proof of Theorem \ref{thm:manyK}}

Again for ease of notation, we prove the result on the transposed matrices. We now provide a sufficient condition on an ensemble of matrices $A_1, A_2, \cdots A_k \in \R^{d \times n}$ such that for a random matrix $P$ of size $n \times l$, the matrix $M = \begin{bmatrix}
A_1P & A_2P & \cdots & A_kP
\end{bmatrix} $ is full column rank. 

Towards this end, note that we have
\[
M = \begin{bmatrix}
A_1P & A_2P & \cdots & A_kP
\end{bmatrix} \\
= \begin{bmatrix}
A_1VV^TP & A_2VV^TP & \cdots & A_kVV^TP
\end{bmatrix}
\]
Due to the rotational invariance of the standard normal distribution, we can denote $V^TP$ as $P$ and also denote $A_iV$ as $A_i$ itself without loss of generality. With this in place, we can rearrange the columns of $M$ as 
\[ \widehat{M} = \begin{bmatrix}
A_1p_1 & A_2p_1 & \cdots & A_kp_1 & A_1p_2 & \cdots & A_kp_l
\end{bmatrix} \]
Expanding the determinant of the first $kl \times kl$ submatrix of $\widehat{M}$, we can obtain the coefficient of the term $p_{11}^k p_{22}^k\cdots p_{ll}^k$ as 
\begin{align*}
C & = \det(\begin{bmatrix}
A_1^{(1)} & A_2^{(1)} \cdots A_k^{(1)} & A_1^{(2)} & \cdots & A_k^{(l)}
\end{bmatrix} \\
& = \det(\begin{bmatrix}
A_{1\mathcal{S}} & A_{2\mathcal{S}} & \cdots & A_{k\mathcal{S}})
\end{bmatrix} \\
& = \det(\begin{bmatrix}
A _1V_{\mathcal{S}} & A_2V_{\mathcal{S}} & \cdots & A_kV_{\mathcal{S}}
\end{bmatrix})
\end{align*} where we assume $\mathcal{S} = \{ 1,\cdots, l\}$ without loss of generality and $A_i^{(j)}$ denotes the $j^{\text{th}}$ column of $A_i$. If $\begin{bmatrix}
A _1V_{\mathcal{S}} & A_2V_{\mathcal{S}} & \cdots & A_kV_{\mathcal{S}}
\end{bmatrix}$  is full column rank, then $C \neq 0$ and the result follows. 

When $r$ is not a multiple of $K$, the same result can be extended to show that if there exist index sets $\mathcal{S}_1$ and $\mathcal{S}_2$ such that $|\mathcal{S}_2| = \lfloor r/K \rfloor$, $|\mathcal{S}_1|= \lfloor r/K\rfloor+1$, $\mathcal{S}_2 \subset \mathcal{S}_1$ and an orthobasis $V$ such that $ \widehat{M} = \begin{bmatrix}
A _1V_{\mathcal{S}_1} & A_2V_{\mathcal{S}_1} & \cdots  & A_qV_{\mathcal{S}_1} & A_{q+1}V_{\mathcal{S}_2}& A_kV_{\mathcal{S}_2}
\end{bmatrix}$ is full column rank, then for $l = \lfloor r/K \rfloor + 1$, $\calR(Z) = \calR(A)$. 

\clearpage
\bibliographystyle{unsrt}
\bibliography{bibliography}

\end{document}